\newcommand{\be}{\begin{equation}}
\newcommand{\ee}{\end{equation}}
\newcommand{\ba}{\begin{array}}
\newcommand{\ea}{\end{array}}
\newcommand{\bea}{\begin{eqnarray}}
\newcommand{\eea}{\end{eqnarray}}
\newcommand{\calH}{{\cal H }}
\newcommand{\calZ}{{\cal Z }}
\newcommand{\calS}{{\cal S }}
\newcommand{\calD}{{\cal D }}
\newcommand{\calY}{{\cal Y }}
\newcommand{\calV}{{\cal V }}
\newcommand{\ZZ}{\mathbb{Z}}
\newcommand{\RR}{\mathbb{R}}
\newcommand{\la}{\langle}
\newcommand{\ra}{\rangle}
\newcommand{\nn}{\nonumber}
\newtheorem{dfn}{Definition}
\newtheorem{lemma}{Lemma}
\newtheorem{prop}{Proposition}
\newtheorem{theorem}{Theorem}
\newtheorem{corollary}{Corollary}
\title{A short proof of stability of topological order under local perturbations}
\author{Sergey Bravyi\footnote{IBM Watson Research Center, Yorktown Heights NY 10594 (USA); sbravyi@us.ibm.com},  \, \,
and  \, \,
Matthew B. Hastings\footnote{Microsoft Research Station Q, CNSI Building, University of California, Santa Barbara, CA,
93106 (USA); mahastin@microsoft.com},
}
\begin{document}

\maketitle

\abstract{Recently, the stability of certain topological phases of matter under weak perturbations was proven.
Here, we present a short, alternate proof of the same result.
We consider models of topological quantum order for which the unperturbed Hamiltonian $H_0$ can be written as a sum of local pairwise
commuting projectors on a $D$-dimensional lattice.  We consider
a perturbed Hamiltonian $H=H_0+V$ involving a generic perturbation $V$
that can be written as a sum of short-range bounded-norm interactions.
We prove that if the strength of $V$ is below a constant threshold value then $H$
has well-defined spectral bands originating from the low-lying
eigenvalues of $H_0$. These bands are separated from the rest of the spectrum and from each other by a constant gap.
The width of the band originating from the smallest eigenvalue of $H_0$
decays faster than any power of the lattice size.
 }

\section{Introduction}
Quantum spin Hamiltonians exhibiting topological quantum order (TQO) have a remarkable property that 
their ground state degeneracy cannot be lifted by generic local perturbations~\cite{Wen90,tc}. In addition, the spectral gap above the ground state does not close in a presence of such perturbations.
 This property is in sharp contrast with the behavior of classical spin Hamiltonians such as the 2D Ising model for which the ground state degeneracy is unstable in a presence of the external magnetic field.
Recently the authors of~\cite{prev} presented a rigorous proof of the gap stability for a large class of Hamiltonians
describing TQO.  The proof was direct, but very long.
This paper presents a short technical note giving an alternate proof of the same result.  We feel that both proofs are worth  having,
since they are complementary.  In particular, the present proof is much  shorter, but less direct.  Since we consider the same
models and prove the same result, many of the definitions in this paper are identical.

The interest in proving stability of a  spectral gap under perturbations has
increased due to recent progress in mathematical physics, where a combination of
Lieb-Robinson bounds~\cite{lr1,lr2,lr3} and the method of quasi-adiabatic continuation~\cite{hwen,tjo}
 with appropriately chosen filter functions now provides powerful techniques for studying
the properties of gapped local quantum Hamiltonians.
Perhaps surprisingly, we will use similar techniques to
prove the existence of a gap.
In contrast, to the best of our knowledge, all previous works addressing the gap stability problem
relied on perturbative methods such as the cluster expansion for the thermal Gibbs state~\cite{KenTas92,yarot,klich},
Kirkwood-Thomas expansion~\cite{KT83,Datta02}, or the coupled cluster method~\cite{Yarotsky04,Bravyi08}.
Our techniques are particularly well suited to handle topologically ordered ground states, although they can also be applied to topological trivial states such as ground states of classical Hamiltonians (in the classical case stability is possible only for
non-degenerate ground states). Further, as shown in \cite{prev,hwen}, the gap stability implies that many of the topological properties of the unperturbed system carry over to the perturbed system.

We consider  a system composed of  finite-dimensional quantum particles (qudits)
occupying sites of a $D$-dimensional lattice $\Lambda$ of linear size $L$.
Suppose the unperturbed Hamiltonian $H_0$  can be written as a sum of
geometrically local pairwise commuting projectors,
\[
H_0=\sum_{A \subseteq \Lambda} Q_A,
\]
such that its ground subspace $P$ is annihilated by every projector, $Q_AP=0$.
We  impose two extra conditions on $H_0$ and the ground subspace $P$
which are responsible for the
topological order.  In~\cite{prev}, it was shown that 
neither of 
these conditions by itself is sufficient for the gap stability.
Let us first state these conditions informally (see Section~\ref{sec:tqo} for formal definitions):
\begin{enumerate}
\item[] {\bf TQO-1:} The ground subspace $P$ is a quantum  code with a macroscopic distance,
\item[] {\bf TQO-2:} Local ground subspaces are consistent with  the global one.
\end{enumerate}

We consider a perturbation $V$ that can be written as a sum of local interactions
\[
V=\sum_{r\ge 1}\; \;  \sum_{A\in \calS(r)}  \; \; V_{r,A},
\]
where $\calS(r)$ is a set of cubes of linear size $r$ and $V_{r,A}$ is an operator acting on sites of $A$.
We assume that the magnitude of interactions decays exponentially for large $r$,
\[
\max_{A\in \calS(r)} \|V_{r,A}\|\le Je^{-\mu r},
\]
where $J,\mu>0$ are some constants independent of $L$.
Our main result is the following theorem.
\begin{theorem}
\label{thm:main}
There exist constants $J_0,c_1>0$ depending only on $\mu$
and the spatial dimension $D$ such that for all $J\le J_0$ the spectrum of $H_0+V$ is
contained (up to an overall energy shift) in the union of intervals $\bigcup_{k\ge 0} I_k$, where
$k$ runs over the spectrum of $H_0$ and $I_k$ is the closed interval
\[
I_k=[k(1-c_1J)-\delta,k(1+c_1J)+\delta],
\]
for some $\delta$ bounded by $J$ times a quantity decaying faster than any power of $L$.
\end{theorem}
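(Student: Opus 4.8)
\emph{Proof strategy.} The plan is to interpolate and to use quasi-adiabatic continuation (spectral flow), exactly as in the study of gapped phases, but now to \emph{produce} the band structure rather than to assume it. Set $H(s)=H_0+sV$ for $s\in[0,1]$, let $P_k$ be the spectral projector of $H_0$ onto its eigenvalue $k$ (so $P_0=P$, and $\spect(H_0)\subseteq\{0,1,2,\dots\}$ since the $Q_A$ are commuting projectors), and let $\Delta_0\le 1$ be half the spectral gap of $H_0$. Fix once and for all a filter function $F(t)$ with $\int|F|<\infty$, sub-exponentially decaying tails, and $\hat F(\omega)=-1/\omega$ for $|\omega|\ge\Delta_0$; the (fixed, $L$-independent) sub-exponential rate of $F$ is the only source of the ``faster than any power of $L$'' behaviour below.

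The argument is a bootstrap in $s$. Let $s^{\ast}\le 1$ be the supremum of those $s$ such that, for all $s'\le s$, the spectrum of $H(s')$ decomposes into bands $\Pi_k(s')$ separated by gaps $\ge\Delta_0$, one per eigenvalue $k$ of $H_0$. On $[0,s^{\ast})$ the quasi-adiabatic generator $\calD(s)=\int F(t)\,e^{iH(s)t}\,V\,e^{-iH(s)t}\,dt$ is, by a Lieb--Robinson bound for $H(s)$ combined with the decay of $F$ and of the $V_{r,A}$, a sum of local terms of norm $O(Je^{-\mu' r/\log^2 r})$ on regions of diameter $r$; hence the spectral-flow unitary $U(s)=\mathcal T\exp(i\int_0^s\calD)$ is well defined and quasi-local. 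Because the filter is chosen so that the spectral-flow identity $\partial_s\Pi_k(s)=i[\calD(s),\Pi_k(s)]$ holds \emph{exactly}, we get $U(s)^\dagger\Pi_k(s)U(s)=P_k$, and since $[\Pi_k(s),H(s)]=0$ the rotated Hamiltonian $\tilde H(s):=U(s)^\dagger H(s)U(s)$ commutes with every $P_k$, i.e.\ it is \emph{exactly} block-diagonal with respect to the eigenspaces of $H_0$. Writing $\tilde H(s)=H_0+\tilde V(s)$ with $\tilde V(s)=(U(s)^\dagger H_0U(s)-H_0)+sU(s)^\dagger V U(s)$, and using that $U(s)$ is generated by $\calD\propto J$ over ``time'' $s$, the perturbation $\tilde V(s)$ is again a sum of local terms of norm $O(Je^{-\mu' r/\log^2 r})$, each carrying an overall factor $sJ$.

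It remains to bound the blocks $P_k\tilde V(s)P_k$. Since $\tilde H(s)$ is exactly block-diagonal, $\spect(\tilde H(s))=\bigcup_k\bigl(k+\spect(P_k\tilde V(s)P_k)\bigr)$, so no resolvent estimate is needed: it suffices to locate the spectra of the blocks. For $k=0$ this is TQO-1: every local term of $\tilde V(s)$ whose support has diameter below the code distance $d\sim L^{\alpha_1}$ acts on $P$ as a scalar, so $P\tilde V(s)P=c_0(s)P+R(s)$ with $\|R(s)\|\le c\,sJ\,L^{D}e^{-\mu' d/\log^2 d}=:\delta$, which is $J$ times a quantity decaying faster than any power of $L$. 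For $k\ge 1$ this is TQO-2: a state of $H_0$-energy $k$ looks locally like a ground state away from its $O(k)$ excited regions, so TQO-2 forces the local terms of $\tilde V(s)$ supported away from those regions to act as the \emph{same} scalars $c_B$ as in the ground space; collecting these gives $c_0(s)$ again, while the remaining terms (those near one of the $O(k)$ excited regions, of which there are $O(k)$ at strength $O(J)$ each, plus the $\diam\ge d$ tail) have norm $\le c_1\,sJ\,k+\delta$. Thus $P_k\tilde V(s)P_k=c_0(s)P_k+(\text{norm}\le c_1 sJk+\delta)$, so after subtracting the overall shift $c_0(s)$ we get $\spect(\tilde H(s))\subseteq\bigcup_k I_k(s)$, where $I_k(s)$ is centred at $k$ with half-width $c_1 sJk+\delta$. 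Choosing $J_0$ with $c_1J_0<\tfrac13$ (and $\delta<\tfrac13$) keeps the low-lying intervals disjoint with gaps $\ge\Delta_0$ between them, reestablishing the bootstrap hypothesis with room to spare. Hence the estimates are uniform in $s$, $s^{\ast}=1$, and at $s=1$ the inclusion $\spect(H)\subseteq\bigcup_k I_k$ with the stated $\delta$ is precisely the theorem.

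The main obstacle is the machinery of the second paragraph: one must verify that the spectral flow driven by the filtered perturbation block-diagonalizes $H_0$ exactly while the rotated perturbation remains quasi-local with an $L$-independent decay rate and an overall factor $J$. Only then does TQO-1 convert ``diameter $\ge d\sim L^{\alpha_1}$'' into ``smaller than any power of $L$'' for the ground band, and TQO-2 convert ``$k$ excited regions'' into the linear-in-$k$ band widths. By comparison, the interpolation, the Lieb--Robinson and filter-function locality estimates for $\calD(s)$ and $U(s)$, and the bootstrap continuity argument are all routine.
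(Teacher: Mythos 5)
Your proposal follows the paper's broad outline (interpolation $H_0+sV$, exact quasi-adiabatic continuation, a self-consistency/bootstrap argument in $s$, TQO-1,2 to control the rotated perturbation), but it contains a gap at its central step. Your bootstrap hypothesis demands that \emph{every} band $\Pi_k(s)$, one per eigenvalue $k$ of $H_0$, stay isolated by a gap $\ge\Delta_0$, and your exact block-diagonalization $U(s)^\dagger\Pi_k(s)U(s)=P_k$ --- hence $[\tilde H(s),P_k]=0$ for all $k$ and the identity $\spect(\tilde H(s))=\bigcup_k\bigl(k+\spect(P_k\tilde V(s)P_k)\bigr)$ --- relies on this for every $k$. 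But by your own estimate the band at energy $k$ has half-width of order $c_1sJk+\delta$, so consecutive bands overlap once $k\gtrsim 1/(2c_1J)$; since $H_0$ has eigenvalues up to order $L^D$, for any fixed $J>0$ the high bands merge, and indeed your closing step only re-establishes disjointness of the ``low-lying'' intervals, so the bootstrap hypothesis as stated is never recovered. Once two bands merge, the spectral flow transports only the projector onto their union, $\tilde H(s)$ no longer commutes with the individual $P_k$, the off-diagonal blocks $P_k\tilde V(s)P_{k'}$ need not vanish, and the spectral identity you use to locate the spectrum breaks down. The paper avoids exactly this trap: the quasi-adiabatic continuation is applied only to the lowest band (the $g$ smallest eigenvalues), the self-consistent gap assumption ($\Delta(s)\ge 1/2$ implies $\Delta(s)\ge 3/4$) concerns that band alone, and the location of \emph{all} intervals $I_k$ is obtained instead from a relative-boundedness estimate (Lemmas~\ref{lemma:relb} and~\ref{lemma:lbd}), which needs no gaps between excited bands and is unaffected by their overlap at large $k$.

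The second gap is your one-sentence treatment of the blocks with $k\ge 1$. After the rotation the perturbation is only \emph{globally} block-diagonal: an individual local term of $\tilde V(s)$ need not commute with $P$, so TQO-1 only says its compression $PO_AP$ is a scalar, and TQO-2 cannot be invoked for it directly; the assertion that ``local terms supported away from the excited regions act as the same scalars'' is not available termwise. Converting global block-diagonality into (approximate) local block-diagonality is precisely the content of Lemma~\ref{rewrite1} (the filtered rewriting $H_s'=H_0+\sum_u X_u$ with $[X_u,P]=0$) combined with Lemma~\ref{lemma:solveloc} (which uses TQO-1,2 through Eq.~(\ref{useeq}) and the $E_m$ decomposition), and the linear-in-$k$ band width then follows from the operator estimate $\la\psi|W(r)^2|\psi\ra\le b(r)^2\la\psi|H_0^2|\psi\ra$ proved via the counting of nonzero blocks $R_\calY W R_\calZ$ and the inequality $G\le H_0^2$ in Lemma~\ref{lemma:lbd}. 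These steps are the bulk of the proof, not routine bookkeeping, and your sketch supplies no substitute for them.
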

Using the quasi-adiabatic continuation operators in \cite{inprep}, this superpolynomial bound on $\delta$ can be improved
to a bound by an exponential of a polynomial of $L$, where the polynomial may be taken to be arbitrarily close to linear.

The stability proof presented in~\cite{prev} involved two main steps: (i) proving stability for a special class
of {\em block-diagonal} perturbations $V$ composed of local interactions $V_{r,A}$ preserving the ground subspace of $H_0$,
and (ii) reducing generic local perturbations to block-diagonal perturbations. The most technical part of the proof was step~(ii)
which required complicated convergence analysis for Hamiltonian flow equations.

In the present paper we show how to simplify step~(ii) significantly using  an exact quasi-adiabatic continuation technique~\cite{hwen,tjo}.
In Section~\ref{sec:LR} we define two types of block-diagonal perturbations:
 the ones that preserve the ground subspace of $H_0$ locally ($[V_{r,A},P]=0$) and globally ($[V,P]=0$). We prove stability under locally block-diagonal perturbations in Section~\ref{sec:lbd}
which mostly follows~\cite{prev} and uses the technique of relatively bounded operators.
The reduction from globally block-diagonal to locally block-diagonal perturbations is in two steps;
in Lemma~\ref{rewrite1} we
exploit the idea of~\cite{solvloc,kitaev} to write a Hamiltonian
of a gapped system as a sum of terms such that the ground states are
(approximate) eigenvectors of each term separately, and then we show that such terms can be written as locally
block-diagonal perturbations in Section~\ref{sec:red}.
Finally, we use an exact quasi-adiabatic continuation technique presented in Section~\ref{sec:continue} to reduce generic local perturbations to
globally block-diagonal perturbations in Section~\ref{sec:proof}.

In contrast to~\cite{prev} we use certain self-consistent assumptions on the spectral gap
of the perturbed Hamiltonian $H_s=H_0+sV$. In Section~\ref{sec:proof} we prove that if the minimal gap $\Delta_{min}$
along the path $0\le s\le 1$ is bounded by some constant (say $1/2$) then,
in fact, $\Delta_{min}$ is bounded by a larger constant (say $3/4$). We use continuity arguments to translate this
result into an unconditional constant bound on $\Delta_{min}$, see Section~\ref{sec:proof}.  The continuity arguments
we use rely on the fact that the spectrum of $H_0+sV$ is a continuous function of $s$ for any finite sized system;
however, our theorem gives bounds that are independent of system size.

\section{Hamiltonians describing TQO}
\label{sec:tqo}

To simplify notation we shall restrict ourselves to the spatial dimension $D=2$.
A generalization to an arbitrary $D$ is straightforward.
Let $\Lambda=\ZZ_L\times \ZZ_L$ be a two-dimensional square lattice of linear size $L$ with periodic boundary conditions.
We assume that every site $u\in \Lambda$ is occupied by a finite-dimensional quantum particle (qudit)
such that the Hilbert space describing $\Lambda$ is a tensor product
\be
\calH=\bigotimes_{u\in \Lambda} \calH_u, \quad \dim{\calH_u}=O(1).
\ee
Let $\calS(r)$ be a set of all square blocks $A\subseteq \Lambda$ of size $r\times r$,
where $r$ is a positive integer.
Note that $\calS(r)$ contains $L^2$  translations of some elementary square
of size $r\times r$ for all $r<L$, $\calS(L)=\Lambda$, and $\calS(r)=\emptyset$ for $r>L$.
We shall assume that the unperturbed Hamiltonian $H_0$ involves only $2\times 2$ interactions
(otherwise consider a coarse-grained lattice):
\be
\label{H_0}
H_0=\sum_{A\in \calS(2)} Q_A.
\ee
Here $Q_A$ is an interaction that has support on a square $A$.
We also assume that  the interactions $Q_A$ are pairwise commuting  projectors,
\be
Q_A^2=Q_A, \quad Q_A Q_B = Q_B Q_A \quad \mbox{for all $A,B\in \calS(2)$}.
\ee
Well known examples of Hamiltonians composed of commuting projectors are
Levin-Wen  models~\cite{LW} and quantum double models~\cite{tc}.

We assume that $H_0$ has zero ground state energy, i.e.,
the ground states of $H_0$ are zero-eigenvectors of every projector $Q_A$.
Let $P$ and $Q$ be the projectors onto the ground subspace and the excited subspace of $H_0$,
that is,
\be
P=\prod_{A\in \calS(2)} (I-Q_A) \quad \mbox{and} \quad Q=I-P.
\ee
For any square $B\in \calS(r)$, $r\ge 2$ define local versions of $P$ and $Q$ as
\be
\label{local_projectors}
P_B =\prod_{\substack{A\in \calS(2) \\ A\subseteq B}} (I-Q_A) \quad \mbox{and} \quad Q_B=I-P_B.
\ee
Note that $P_B$ and $Q_B$ have support on  $B$.

We shall need two extra properties of $H_0$ related to TQO defined in \cite{prev}.
We shall assume that there exists an integer $L^*\ge L^a$ for some constant $a>0$ such that
one has the following properties:
\begin{enumerate}
\item
{\bf TQO-1:} Let $A\in \calS(r)$ be any square of size $r\le L^*$. Let $O_A$
be any operator acting on $A$. Then
\[
PO_AP =c P
\]
for some complex number $c$.

\item {\bf TQO-2:} Let $A\in \calS(r)$ be any square of size $r\le L^*$
and let $B\in \calS(r+2)$ be the square that contains $A$ and all nearest neighbors of $A$.
 Let $O_A$
be any operator acting on $A$ such that $O_A P=0$. Then
\[
O_A P_B=0.
\]
\end{enumerate}

The property TQO-1 is often taken as a definition of TQO, see Ref.~\cite{Wen90,tc,Freedman01,bhv}.
One can also show that $L^*\ge d/4$ where $d$ is the distance of $P$
considered as a codespace of a quantum error correcting code, i.e., the smallest integer $m$ such that erasure
of any subset of $m$ particles can be corrected, see  Ref.~\cite{BPT09} for details.

We will need later a corollary of TQO-1 and TQO-2.
We use $b_l(A)$ to denote the square containing square $A$ as well as all first, second,...,$l$-th neighbors of square $A$.
That is, it is a ``ball" of distance $l$ around $A$.
\begin{corollary}
\label{localrho}
Let $O_A$ be any operator supported on a square $A$.
Let $C=b_2(A)$ and suppose
that $C$ has size bounded by $L^*$.  Then
\be
P_C O_A P_C=c P_C,
\ee
for some constant $c$.

Thus, all states $\psi$ with $P_C \psi=\psi$ have the same reduced density matrix on $A$,
and in particular the reduced density matrix of $\psi$ on square $A$ is equal to the
reduced density matrix of any ground state on $A$.

Finally, we have the useful equality:
\be
\label{useeq}
\|O_A P\| = \|O_A P_C\|.
\ee
\begin{proof}
Let $B=b_1(A)$.
Note that $P_B O_A P_B$ commutes with the Hamiltonian, and hence commutes with $P$.
So, $P \left( P_B O_A P_B \right) P=\left( P_B O_A P_B \right) P$.  So by condition TQO-1, $\left( P_B O_A P_B \right) P=c P$ for some $c$.
Define $O'_A=O_A-c$.  Then, $\left( P_B O'_A P_B \right) P=0$.  So, by condition TQO-2, $\left( P_B O'_A P_B \right) P_C=0$.  Note that
$P_B O'_A P_B$ commutes with $P_C$.  So, $P_C \left( P_B O'_A P_B \right) P_C=0$.  Hence, $P_C O'_A P_C=0$.  So, $P_C O_A P_C=c P_C$.

Now consider the second claim.  Let $\psi$ be any state such that $P_C \psi=\psi$.  Let $\rho_A(\psi)$ be the reduced density
matrix of $\psi$ on $A$.  Then, ${\rm tr}(\rho_A O_A)={\rm tr}(P_C |\psi\rangle\langle\psi| P_C O_A)=c(O_A) {\rm tr}(|\psi\rangle\langle\psi|)=c(O_A)$, where the constant $c(O_A)$ depends only on $O_A$ and not on $\psi$.  Thus, the reduced density matrix is the same
for all vectors such that $P_C\psi=\psi$.

Finally, let $\psi$, $\Psi_0$ satisfy $\|O_A P_C\| = |O_A \psi|$ and $\|O_A P\|=|O_A \Psi_0|$. Then, we have that $P_C \psi = \psi$ and $P\Psi_0 = \Psi_0$, which implies
${\rm tr}(|O_A|^2 |\psi\ra\la\psi|) = {\rm tr}(|O_A|^2 |\Psi_0\ra\la\Psi_0|)$, since their reduced density matrices agree on $A$. Hence, $\|O_A P\| = \|O_A P_C\|$, as claimed.

\end{proof}
\end{corollary}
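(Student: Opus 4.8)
The plan is to reduce all three assertions to the first one, $P_C O_A P_C = c P_C$, since the statement about reduced density matrices and the norm identity \eqref{useeq} follow from it by short computations. The key preliminary observation is a locality fact: with $B=b_1(A)$, every $2\times 2$ square $A'\in\calS(2)$ that meets $A$ is contained in $B$, because $B$ contains $A$ together with all of its nearest neighbors. I would use this to show that $W:=P_B O_A P_B$ commutes with every projector $Q_{A'}$: if $A'\subseteq B$ then $Q_{A'}$ is one of the mutually commuting factors making up $P_B$, so $Q_{A'}P_B=P_B Q_{A'}=0$ and hence $Q_{A'}W=WQ_{A'}=0$; if $A'\not\subseteq B$ then $A'$ is disjoint from $A$ and commutes with each factor of $P_B$, so $[Q_{A'},W]=0$ directly. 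Consequently $W$ commutes with $H_0$, with $P$, and with every local projector $P_{B'}$ for $B'\supseteq B$; in particular with $P_C$, since $C=b_2(A)\supseteq B$.

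First I would invoke TQO-1. The square $B$ is contained in $C$ and so has size at most $L^*$, and $W$ acts only on $B$, so TQO-1 gives $PWP=cP$ for some constant $c$; since $W$ commutes with $P$ this improves to $WP=cP$, i.e.\ $(P_B O_A' P_B)P=0$ with $O_A':=O_A-c$ (here I use $P_B P=P$). Next I would apply TQO-2 to the operator $P_B O_A' P_B$, which is supported on $B$ and annihilates $P$: taking $C=b_1(B)=b_2(A)$, which has size $\le L^*$ by hypothesis, TQO-2 yields $(P_B O_A' P_B)P_C=0$. Because $P_B O_A' P_B$ also commutes with $P_C$, and because $P_C P_B=P_B P_C=P_C$ (the range of $P_C$ sits inside the range of $P_B$), this gives $P_C O_A' P_C=P_C(P_B O_A' P_B)P_C=0$, i.e.\ $P_C O_A P_C=cP_C$.

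For the second claim I would note that any normalized $\psi$ with $P_C\psi=\psi$ satisfies $\trace(\rho_A(\psi)\,O_A)=\la\psi|P_C O_A P_C|\psi\ra=c(O_A)$, a number depending only on $O_A$; since this pins down every matrix element of $\rho_A(\psi)$, the reduced state on $A$ is the same for all such $\psi$, and in particular agrees with that of a ground state $\Psi_0$, which satisfies $P\Psi_0=\Psi_0$ and hence $P_C\Psi_0=\Psi_0$. For \eqref{useeq}, applying the identity with $O_A^\dagger O_A$ in place of $O_A$ shows that $|O_A\psi|^2=\trace(\rho_A(\psi)\,O_A^\dagger O_A)$ is constant on the range of $P_C$; as the range of $P$ is a subspace of the range of $P_C$ and both $\|O_A P\|$ and $\|O_A P_C\|$ are maxima of $|O_A\psi|$ over the respective sets, they coincide.

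The main thing requiring care is the bookkeeping of supports and square sizes: one must use the elementary but essential fact that every $2\times2$ interaction square touching $A$ lies inside $b_1(A)$ — this is what makes $W$ genuinely supported on $B$ and commuting with $H_0$ — and one must keep $b_2(A)$ within $L^*$ so that TQO-1 (applied on $b_1(A)$) and TQO-2 (applied on $b_1(A)$, producing $b_2(A)$) are both licensed. Beyond that there are no convergence or perturbative estimates; the whole argument is algebraic.
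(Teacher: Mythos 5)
Your proposal is correct and follows essentially the same route as the paper's proof: show $P_B O_A P_B$ commutes with $H_0$ (hence with $P$ and $P_C$), apply TQO-1 on $B=b_1(A)$, subtract the constant and apply TQO-2 to pass to $P_C=b_2(A)$, then derive the reduced-density-matrix statement and obtain Eq.~(\ref{useeq}) from the constancy of $\langle\psi|O_A^\dagger O_A|\psi\rangle$ on the range of $P_C$. The only differences are cosmetic: you spell out the support/commutation bookkeeping that the paper states in one line, and you phrase the norm identity via maxima over the ranges of $P$ and $P_C$ rather than picking explicit maximizers as the paper does.
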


\section{Local Hamiltonians and the Lieb-Robinson bounds}
\label{sec:LR}
Throughout this paper we restrict ourselves to Hamiltonians with a sufficiently fast spatial decay of
interactions. Any such  Hamiltonian $V$ will be specified using a decomposition into local interactions,
\be
\label{dec}
V=\sum_{r\ge 1} \; \; \sum_{A\in \calS(r)} \; \; V_{r,A},
\ee
where $V_{r,A}^\dag=V_{r,A}$ is an operator
acting non-trivially only on a square $A$. We shall often identify a Hamiltonian and the corresponding decomposition
unless it may lead to confusions. Let us define several important classes of Hamiltonians.
\begin{dfn}
A Hamiltonian $V$ has support near a site $u\in \Lambda$ if all squares in the decomposition Eq.~(\ref{dec})
contain $u$.
\end{dfn}
\begin{dfn}
A Hamiltonian $V$ is globally block-diagonal iff it preserves the ground subspace $P$, that is, $[V,P]=0$.
A Hamiltonian $V$ is locally block-diagonal iff all terms $V_{r,A}$ in the decomposition Eq.~(\ref{dec})
preserve the ground subspace, that is, $[V_{r,A},P]=0$ for all $r,A$.
\end{dfn}

\begin{dfn}
A Hamiltonian $V$ has strength $J$ if there exists a function $f\, : \, \ZZ_+ \to [0,1]$ decaying faster than any power such that
\[
\| V_{r,A} \| \le J\, f(r) \quad \mbox{for all $r\ge 1$ for all $A\in \calS(r)$}.
\]
\end{dfn}
Our main technical tool will be the following corollary of the Lieb-Robinson bound for systems with interactions
decaying slower than exponential~\cite{hastings-koma,bhv}.
\begin{lemma}
\label{lemma:LRslow}
Let $H$ be a  (time-dependent) Hamiltonian with strength $O(1)$. Let $V$ be a Hamiltonian
with strength $J$. Let $U(t)$ be the unitary evolution for time $t$, $|t|\le 1$, generated by $H$.
 Define
\[
\tilde{V} = U(t) V U(t)^\dag.
\]
Then $\tilde{V}$ has strength $O(J)$. If $V$ has support near a site $u$ then $\tilde{V}$ also has support near $u$.
\end{lemma}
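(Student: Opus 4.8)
The statement to prove is Lemma~\ref{lemma:LRslow}: conjugating a strength-$J$ Hamiltonian $V$ by a finite-time evolution $U(t)$ generated by a strength-$O(1)$ Hamiltonian $H$ produces a Hamiltonian $\tilde V$ of strength $O(J)$, and locality near a site is preserved. The natural route is to produce, for each term $V_{r,A}$ in the decomposition of $V$, a local decomposition of $U(t)V_{r,A}U(t)^\dagger$ whose terms have norm $\le \|V_{r,A}\|$ times a rapidly-decaying function of the block size, and then collect these contributions block by block. The essential mechanism is the Lieb-Robinson bound for Hamiltonians with subexponentially-decaying interactions~\cite{hastings-koma,bhv}: the Heisenberg-evolved operator $U(t)V_{r,A}U(t)^\dagger$ can be approximated, to within an error decaying faster than any power of $\ell$, by an operator supported on $b_\ell(A)$, the $\ell$-neighborhood of $A$.

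\textbf{Key steps.} First, fix a term $V_{r,A}$ and write $O(t) := U(t)V_{r,A}U(t)^\dagger$. Using the standard conditional-expectation (partial trace over the complement) construction, define $O_\ell(t)$ to be the projection of $O(t)$ onto operators supported on $b_\ell(A)$; the Lieb-Robinson bound for strength-$O(1)$, $|t|\le 1$ evolution gives $\|O(t)-O_\ell(t)\| \le \|V_{r,A}\|\, g(\ell)$ for some fixed function $g$ decaying faster than any power. Second, form the telescoping series $O(t) = O_{\ell_0}(t) + \sum_{j\ge 0}\bigl(O_{\ell_{j+1}}(t)-O_{\ell_j}(t)\bigr)$ for a suitable increasing sequence $\ell_j$ (e.g. $\ell_j = j$); the $j$-th increment is supported on the square $b_{\ell_{j+1}}(A)$, which has linear size $r + 2\ell_{j+1} = O(r+j)$, and has norm at most $\|V_{r,A}\|\,(g(\ell_j)+g(\ell_{j+1})) \le 2\|V_{r,A}\|\, g(\ell_j)$. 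Thus each $V_{r,A}$ contributes, to the block $b_{\ell_{j+1}}(A)$, a term of norm $\le 2 J f(r) g(j)$. Third, reindex by the target block: a square $B\in\calS(R)$ receives contributions from those pairs $(r,A)$ with $b_{\ell_j}(A)=B$ for some $j$, i.e. with $r+2j = R$ and $A$ determined by $B$ and $j$; summing $\sum_{r+2j=R} 2 J f(r) g(j)$ gives a bound of the form $J\, \tilde f(R)$ where $\tilde f$ is a convolution of two faster-than-any-power functions, hence itself faster than any power. This establishes that $\tilde V$ has strength $O(J)$. Finally, if every square $A$ appearing in $V$ contains $u$, then every $b_\ell(A)$ contains $u$ as well, so every term in the new decomposition of $\tilde V$ contains $u$: support near $u$ is preserved.

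\textbf{The main obstacle.} The genuine difficulty — and the reason the Lieb-Robinson input must be the subexponential version rather than the familiar exponential one — is that $V$ itself is only assumed to have faster-than-any-power (not exponential) decay, so $H$, or a Hamiltonian of interest built from such pieces, need not be exponentially local, and one must use the form of the Lieb-Robinson bound valid in that regime~\cite{hastings-koma,bhv}. Two technical points require care: (i) verifying that the conditional-expectation truncation $O_\ell(t)$ indeed satisfies the claimed approximation bound uniformly in the block $A$ (this is where one invokes the Lieb-Robinson bound, noting that the velocity and the shape of $g$ depend only on the $O(1)$ strength of $H$ and on $\mu,D$, not on $L$ or on $r,A$); and (ii) checking that the convolution of two faster-than-any-power functions is again faster than any power, and that the resulting decay constants in $\tilde f$ depend only on $\mu$ and $D$. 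Both are routine once the telescoping decomposition is set up, so the substance of the proof is entirely in correctly quoting and applying the slow-decay Lieb-Robinson bound and organizing the reindexing sum.
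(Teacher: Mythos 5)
The paper offers no written proof of this lemma---it is stated as a direct corollary of the Lieb-Robinson bounds for sub-exponentially decaying interactions \cite{hastings-koma,bhv}---and your argument (conditional-expectation truncation onto $b_\ell(A)$, telescoping over growing balls, then reindexing the convolution of two faster-than-any-power functions, with the observation that $u\in A$ implies $u\in b_\ell(A)$) is exactly the standard derivation the citation is meant to invoke, and it is correct. The only caveat is cosmetic: the Lieb-Robinson approximation error for an operator supported on a size-$r$ square generally carries a volume factor, $\|O(t)-O_\ell(t)\|\le \|V_{r,A}\|\,c\,r^D g(\ell)$ rather than being uniform in $r$, but since $f(r)r^D$ still decays faster than any power this is absorbed without changing your reindexing step.
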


We will also need an analogue of Lemma~\ref{lemma:LRslow} for an infinite evolution time.
It will be applicable only to evolution under Hamiltonians $H$ having a finite Lieb-Robinson
velocity~\cite{lr1,lr2,lr3,hastings-koma}, for example, Hamiltonians with exponentially decaying interactions.
\begin{dfn}
A Hamiltonian $V$ has strength $J$ and decay rate $\mu$ iff
\[
\| V_{r,A} \| \le J\, \exp{(-\mu r)} \quad \mbox{for all $r\ge 1$ for all $A\in \calS(r)$}.
\]
\end{dfn}
\begin{lemma}
\label{lemma:LRfast}
Let $H$ be a  (time-dependent) Hamiltonian with strength $O(1)$ and decay rate $\mu>0$. Let $V$ be a Hamiltonian
with strength $J$. Let $U(t)$ be the unitary evolution for time $t$ generated by $H$.
Finally, let $g(t)$ be any function decaying faster than any power for large $|t|$. Define
\[
\tilde{V} = \int_{-\infty}^\infty dt \, g(t) U(t) V U(t)^\dag.
\]
Then $\tilde{V}$ has strength $O(J)$.
 If $V$ has support near a site $u$ then $\tilde{V}$ also has support near $u$.
\end{lemma}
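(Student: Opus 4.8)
The plan is to use the finite Lieb--Robinson velocity of $H$ --- which exists since $H$ has $O(1)$ strength and exponentially decaying interactions, uniformly in $t$~\cite{lr1,lr2,lr3,hastings-koma} --- to approximate each evolved interaction $U(t)V_{r,A}U(t)^\dag$ by an operator supported on a square, then integrate against $g(t)$ and regroup the resulting local pieces by size.

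First I would recall the standard truncation consequence of the Lieb--Robinson bound: there are constants $v,\mu'>0$, depending only on the strength and decay rate of $H$ and on $D$, such that for any operator $O_A$ supported on a square $A\in\calS(r)$ and any $t$, the operator $O_A^{(\ell)}(t)$ gotten from $U(t)O_AU(t)^\dag$ by a Haar twirl over the qudits outside the square $b_\ell(A)$ is supported on $b_\ell(A)$ and satisfies
\[
\big\|\,U(t)O_AU(t)^\dag-O_A^{(\ell)}(t)\,\big\|\ \le\ \|O_A\|\cdot\min\!\big(2,\ c\,\ell^{\,D-1}e^{-\mu'(\ell-v|t|)}\big)\ =:\ \|O_A\|\,\kappa(\ell,t),
\]
where the left side vanishes once $b_\ell(A)$ has grown to all of $\Lambda=\calS(L)$, and a possible extra polynomial-in-$r$ prefactor (from $|A|=r^2$) is harmless, being absorbed below into the super-polynomial decay of $f$.

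Next, writing $\tilde V=\sum_{r,A}W_{r,A}$ with $W_{r,A}:=\int g(t)\,U(t)V_{r,A}U(t)^\dag\,dt$, I would split each $W_{r,A}$ by telescoping its truncations: taking $O_A=V_{r,A}$, set $W_{r,A}^{(\ell)}:=\int g(t)\,O_A^{(\ell)}(t)\,dt$ (supported on $b_\ell(A)$), put $W_{r,A}^{(-1)}=0$, and use the finite sum $W_{r,A}=\sum_{\ell\ge0}\big(W_{r,A}^{(\ell)}-W_{r,A}^{(\ell-1)}\big)$. The $\ell$-th term is supported on the square $b_\ell(A)\in\calS(r+2\ell)$ and has norm at most $\|V_{r,A}\|\,\big(h(\ell)+h(\ell-1)\big)$, where $h(\ell):=\sup_{\ell'\ge\ell}\int|g(t)|\,\kappa(\ell',t)\,dt$ is non-increasing. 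The one genuinely non-routine estimate is that $h(\ell)$ decays faster than any power of $\ell$: splitting $\int|g(t)|\,\kappa(\ell,t)\,dt$ at $|t|=\ell/(2v)$, the region $|t|\le\ell/(2v)$ contributes at most $c\,\ell^{\,D-1}e^{-\mu'\ell/2}\,\|g\|_1$, while on $|t|>\ell/(2v)$ one uses $\kappa\le2$ and the fact that $\int_{|t|>T}|g(t)|\,dt$ decays faster than any power of $T$.

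Finally I would regroup by square size. Every term $W_{r,A}^{(\ell)}-W_{r,A}^{(\ell-1)}$ lives on a square of size $s=r+2\ell$, and for a fixed $B\in\calS(s)$ and a fixed $r\le s$ with $s-r$ even there is exactly one $A$ with $b_{(s-r)/2}(A)=B$ (terms with $r+2\ell\ge L$ all sit on $\Lambda=\calS(L)$ and go into $\tilde V_{L,\Lambda}$). Collecting these into $\tilde V_{s,B}$ yields
\[
\|\tilde V_{s,B}\|\ \le\ 2J\sum_{\substack{1\le r\le s\\ s-r\ \mathrm{even}}} f(r)\,h\!\big(\lfloor(s-r)/2\rfloor\big),
\]
and splitting at $r=s/2$ --- using $h(\lfloor(s-r)/2\rfloor)\le h(\lfloor s/4\rfloor)$ with $\sum_r f(r)<\infty$ on the low-$r$ part and $f(r)\le f(\lceil s/2\rceil)$ with $\sum_\ell h(\ell)<\infty$ on the high-$r$ part --- gives $\|\tilde V_{s,B}\|\le J\,\tilde f(s)$ for a $\tilde f$ decaying faster than any power; rescaling $\tilde f$ into $[0,1]$, this is exactly the statement that $\tilde V$ has strength $O(J)$. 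For the last claim, if every $V_{r,A}$ acts on a square containing $u$ then so does every $b_\ell(A)$, hence every square in this decomposition of $\tilde V$ contains $u$. The main obstacle is the estimate on $h(\ell)$: the Lieb--Robinson tail is exponentially small only in the combination $\ell-v|t|$ and merely $O(1)$ once $v|t|\ge\ell$, so one must show that convolving it against a filter $g$ that decays only faster than any power still yields decay faster than any power in $\ell$ --- the split of the time integral at $|t|\sim\ell/v$ is exactly what achieves this, and is why the conclusion is faster-than-any-power rather than exponential.
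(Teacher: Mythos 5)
Your proof is correct, and it is essentially the argument the paper has in mind: the paper states this lemma without an explicit proof, presenting it as a standard corollary of the Lieb--Robinson bounds for exponentially decaying interactions~\cite{hastings-koma,bhv}, and your truncation-to-$b_\ell(A)$, telescoping, and regrouping-by-square-size scheme --- with the crucial split of the time integral at $|t|\sim\ell/v$ to trade the exponential Lieb--Robinson tail against the faster-than-any-power decay of $g$ --- is exactly that standard argument, consistent with the paper's remark that the resulting superpolynomial bound depends on $f$, $\mu$, and $g$. The only blemishes are cosmetic (an off-by-one $h(\ell-1)$ versus $h(\ell)$ in the telescoped norm bound, and unstated polynomial prefactors in $r$ and $\ell$ from the truncation bound), both harmlessly absorbed by the superpolynomial decay.
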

The superpolynomially decaying function of $r$ bounding the norm of $\tilde V_{r,A}$ depends on the superpolynomially decaying function
of $r$ bounding the norm of  $V_{r,A}$ as well as on $\mu$ and the function $g(t)$.
({\em Remark:} In fact, one can show that there exist
$g(t)$ that have the  properties needed later and that decay as an
exponential of a power of $t$, making it  possible also to consider
Hamiltonians $H$ that do not have a decay constant $\mu>0$ but that
instead have sufficiently fast stretched exponential decay.  We omit
this for simplicity.)

\section{Reduction from global to local block-diagonality}
\label{sec:red}
In this section we prove that a certain class of globally block-diagonal perturbations
can be reduced to locally block-diagonal perturbations with a small error by simply rewriting the
Hamiltonian in a different  form.
\begin{lemma}
\label{lemma:solveloc}
Consider a Hamiltonian
\be
H=H_0+\sum_{u\in \Lambda} X_u,
\ee
where $H_0$ obeys TQO-1,2 and $X_u$ is a perturbation with strength $J$
that has  support near $u$. Suppose that
\[
[X_u,P]=0 \quad \mbox{for all $u$}.
\]
Then we can rewrite
\be
H=H_0+V'+\Delta,
\ee
where  $V'$ is a locally block-diagonal perturbation with strength $O(J)$,
and $\|\Delta\|$ decays faster than any power of $L^*$.
\end{lemma}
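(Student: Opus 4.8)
The plan is to realise $V'$ by truncating each $X_u$ to a finite square around $u$ and then nudging the truncation to a nearby operator that commutes with $P$ \emph{exactly}; the fact that such a nearby operator exists \emph{and is short range} is precisely the content of Corollary~\ref{localrho}. First I would write $X_u=\sum_{r\ge1}X_{u,r}$, where $X_{u,r}$ collects the terms of the decomposition of $X_u$ acting on size-$r$ squares (all of which contain $u$), and set $X_u^{(l)}:=\sum_{r\le l+1}X_{u,r}$. Then $X_u^{(l)}=X_u^{(l)\dag}$ is supported on the square $A_l$ of size $(2l+1)\times(2l+1)$ centred at $u$, and the strength hypothesis gives $\|X_u-X_u^{(l)}\|\le J\,\Psi(l)$ for a bounded function $\Psi$ decaying faster than any power (the $O(r^{D})$ squares of each size cost only a polynomial factor). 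Put $C_l:=b_2(A_l)\in\calS(2l+5)$ and let $l_{\max}$ be the largest $l$ with $2l+5\le L^*$, so $l_{\max}=\Theta(L^*)\ge\Theta(L^{a})$.

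Next I would bring in $[X_u,P]=0$. Right-multiplying $X_uP=PX_u$ by $P$ gives $X_uP=PX_uP$. For $l\le l_{\max}$ the operator $X_u^{(l)}$ lives on a square of size $\le L^*$, so TQO-1 yields $PX_u^{(l)}P=c_u^{(l)}P$ with $c_u^{(l)}\in\RR$, and writing $O_u^{(l)}:=X_u^{(l)}-c_u^{(l)}I$ we obtain
\be
\|O_u^{(l)}P\|\le\|(X_u^{(l)}-X_u)P\|+\|P(X_u-X_u^{(l)})P\|\le 2J\,\Psi(l),
\ee
so $O_u^{(l)}$ nearly annihilates $P$. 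The crucial move is then the identity (\ref{useeq}): since $O_u^{(l)}$ is supported on $A_l$ with $b_2(A_l)=C_l$ of size $\le L^*$, we get $\|O_u^{(l)}P_{C_l}\|=\|O_u^{(l)}P\|\le 2J\,\Psi(l)$, i.e.\ $O_u^{(l)}$ nearly annihilates the \emph{local} ground-space projector $P_{C_l}$ as well.

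I would then take as the corrected term
\be
\tilde X_u^{(l)}:=c_u^{(l)}I+Q_{C_l}\,O_u^{(l)}\,Q_{C_l},\qquad 1\le l\le l_{\max},
\ee
which is Hermitian and supported on $C_l$. Since $P_{C_l}P=P$, hence $Q_{C_l}P=PQ_{C_l}=0$, one checks $\tilde X_u^{(l)}P=c_u^{(l)}P=P\tilde X_u^{(l)}$, so $[\tilde X_u^{(l)},P]=0$ exactly; on the other hand $X_u^{(l)}-\tilde X_u^{(l)}=P_{C_l}O_u^{(l)}P_{C_l}+P_{C_l}O_u^{(l)}Q_{C_l}+Q_{C_l}O_u^{(l)}P_{C_l}$, each summand of norm $\le\|O_u^{(l)}P_{C_l}\|$ (by Hermiticity of $O_u^{(l)}$), so $\|\tilde X_u^{(l)}-X_u^{(l)}\|\le 6J\,\Psi(l)$ and $\|\tilde X_u^{(l)}-X_u\|\le 7J\,\Psi(l)$. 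Telescoping with $\tilde X_u^{(0)}:=0$, set $V'_{u,l}:=\tilde X_u^{(l)}-\tilde X_u^{(l-1)}$ for $1\le l\le l_{\max}$, $V'_{u,l}:=0$ otherwise, and $V':=\sum_u\sum_l V'_{u,l}=\sum_u\tilde X_u^{(l_{\max})}$. Each $V'_{u,l}$ is Hermitian, supported on $C_l\in\calS(2l+5)$, commutes with $P$, and has norm $\le J\,\chi(l)$ for a bounded, super-polynomially decaying $\chi$ independent of $J$ (the $l=1$ term carrying the genuine $O(J)$ bounded-range piece), which exhibits $V'$ as a locally block-diagonal perturbation of strength $O(J)$. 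Finally $\Delta:=H-H_0-V'=\sum_u\bigl(X_u-\tilde X_u^{(l_{\max})}\bigr)$ obeys $\|\Delta\|\le 7J\,L^{2}\,\Psi(l_{\max})$, which decays faster than any power of $L^*$ since $l_{\max}=\Theta(L^*)$ and $\Psi$ is super-polynomially small.

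The step I expect to carry the real content is the passage from $\|O_u^{(l)}P\|$ to $\|O_u^{(l)}P_{C_l}\|$ via (\ref{useeq})/Corollary~\ref{localrho}, and through it TQO-2: this is exactly what lets the correction be made \emph{locally}, so that $\tilde X_u^{(l)}$ stays short range while commuting with $P$ exactly. The remaining work is bookkeeping — that the scalars $c_u^{(l)}$ are real and mutually close, and that the telescoped pieces are uniformly $O(J)$ in norm rather than only $O(J)$ after resummation — which is why $\tilde X_u^{(l)}$ is taken to be $c_u^{(l)}I$ plus the $Q_{C_l}\!\cdot\!Q_{C_l}$ block of $O_u^{(l)}$, rather than a naive two-sided pinching by $P_{C_l}$ and $Q_{C_l}$, the latter being in general neither close to $X_u^{(l)}$ nor commuting with $P$.
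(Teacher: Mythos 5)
Your proof is correct, but it implements the reduction differently from the paper. Both arguments rest on the same two pillars: TQO-1 to subtract a scalar so that a truncation of $X_u$ nearly annihilates $P$, and the norm identity $\|O_A P\|=\|O_A P_C\|$ of Corollary~\ref{localrho} (hence TQO-2) to convert that into near-annihilation of the \emph{local} projector. From there the routes diverge. The paper first subtracts $\Delta=PX_uP$ so that $X_u'P=0$ exactly, and then performs an exact multi-scale decomposition of $X_u'$ by grouping matrix elements with respect to the nested resolution of identity $E_1=Q_{B_1}$, $E_m=Q_{B_m}P_{B_{m-1}}$, $E_{M+1}=P$; the resulting local terms $Y(j),Z(q)$ annihilate $P$ exactly, and the norm decay of $Y(j)$ is where Eq.~(\ref{useeq}) enters. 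You instead telescope pinched truncations $\tilde X_u^{(l)}=c_u^{(l)}I+Q_{C_l}\bigl(X_u^{(l)}-c_u^{(l)}\bigr)Q_{C_l}$: exact commutation with $P$ comes for free from $Q_{C_l}P=0$, locality comes from pinching on $C_l$, and Eq.~(\ref{useeq}) is used to show each pinching is superpolynomially close to the truncation, with the whole error deferred to the last scale and dumped into $\Delta$. What each buys: the paper's construction produces terms that annihilate $P$ (slightly stronger than block-diagonality, though Lemma~\ref{lemma:lbd} anyway restores this by a scalar shift) and needs no telescoping bookkeeping, at the cost of the combinatorial $E_pX(q)E_r$ analysis; yours avoids that combinatorics entirely and is arguably more elementary, at the cost of checking that the telescoped pieces are individually $O(J\chi(l))$, which you do correctly via the triangle inequality through $X_u$. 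The remaining details in your write-up (reality of $c_u^{(l)}$, the factor $r^D$ from the number of squares per site, one term per square so the strength is $O(J)$, and $\|\Delta\|\le O(JL^2\Psi(l_{\max}))$ being superpolynomially small in $L^*$ since $L^*\ge L^a$) all check out, so I see no gap.
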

\begin{proof}
Consider some fixed site $u$ and let $X\equiv X_u$.
By assumption, $X$ has a decomposition $X=\sum_{r\ge 1} X(r)$, where $X(r)=\sum_{A\in \calS(r), \; A\ni u} \; \; X_{r,A}$
and the norm $\|X(r)\|$ decays faster than any power of $r$.
By adding constants to the different terms $X_{r,A}$ and using TQO-1 we can achieve $PX_{r,A}P=0$
for all $r\le L^*$.
Define $\Delta=PX=PXP=\sum_{r>L^*} \; PX(r) P$. Note that $\|\Delta\|$ decays faster than any power of $L^*$.
Define also $X'=X-\Delta$ such that
\[
X'P=0.
\]
We can assume that $X'$ has strength $O(J)$ and its support is near $u$ if we treat $\Delta$
as a single interaction on a square of size $L$. To simplify notation we set $X=X'$ in the rest of the proof.

Choosing any $l\le L^*$ and
applying Eq.~(\ref{useeq}) to $O=\sum_{r=1}^l X(r)$  we arrive at
\be
\label{almosttqo}
\Vert \sum_{r=1}^{l} X(r) P_{b_{l+2}(u)} \Vert =
\Vert \sum_{r=1}^{l} X(r) P \Vert
\le \| X P\| + \|  \sum_{r>l}  X(r) P \Vert \le
 J f(l)
\ee
for some function $f$ decaying faster than any power of $l$.

Consider a given site $u$.  
Let $B_r$ be the union of all squares of size $r$ that contain $u$, such that $X(r)$ 
has support on $B_r$. We have $B_1\subset B_2 \subset \ldots \subset B_M=\Lambda$
for some integer $M$.
Define an orthogonal unity decomposition $I=\sum_{m=1}^{M+1} E_m$ by
\bea
E_1&=&Q_{B_1}, \nn \\
E_m&=& Q_{B_{m}} P_{B_{m-1}} \quad \mbox{for $2\le m\le M$}, \nn \\
E_{M+1}&=&P_{B_{M}}=P.
\eea
Taking into account that $E_{M+1}X=XE_{M+1}=0$ we arrive at
\be
\label{Xdec}
X=\sum_{1\leq p,q,r\le M} E_p X(q) E_r= \sum_{j \geq 1} Y(j)+ \sum_{q\ge 1} Z(q),
\ee
where
\be
Y(j)=\sum_{\substack{1\leq p,r\le M\\ p+r=j}} E_p \left( \sum_{q=1}^{{\rm max}(p,r)-2} X(q) \right) E_r.
\ee
and
\be
Z(q)=\sum_{1\leq p,r\le q+1} E_p X(q) E_r.
\ee
All operators $Y(j)$ and $Z(j)$ are hermitian, annihilate $P$,
and  act  only on sites in the square $B_{j+1}$.
The norm of $Z(q)$ decays
 faster than any power of  $q$ because of the decay of the norm of $X(q)$.

We claim that
the norm of $Y(j)$ decays faster than any power of $j$.
Indeed,
 $Y(j)$ is a sum over $j-1$ different terms corresponding to different
choices of $p,r$.  We show
that the norm of each such term  $p,r$ decays fast enough.  Assume without loss of generality that $p\geq r$.  Then,
$\Vert E_p \Bigl( \sum_{q=1}^{{\rm max}(p,r)-2} X(q) \Bigr) E_r \Vert \leq \Vert E_p \Bigl( \sum_{q=1}^{{\rm max}(p,r)-2} X(q) \Bigr)\Vert$, which decays faster than any power of $p$
 by Eq.~(\ref{almosttqo}). Since $p\ge j/2$,  it decays faster than any power of $j$.

Thus, we have decomposed $X$ as a sum of terms which annihilate $P$, supported on squares of increasing radius about site
$u$, with norm
decreasing faster than any power of the size of the square.
This completes the proof.
\end{proof}

\section{Stability under locally block-diagonal perturbation}
\label{sec:lbd}

In this section, we define the concept of relatively bounded perturbations (see Chapter~IV of~\cite{Kato}) and 
show that the spectral gaps separating low-lying eigenvalues of $H_0$  are stable against such perturbations. 
We then demonstrate that  locally block-diagonal perturbations satisfy the relative boundness condition
which proves the gap stability for   locally block-diagonal perturbations.

Note that our
condition of relative boundness is different from the one used in~\cite{yarot}. In particular,
our condition provides an elementary proof of the gap stability without the need to use cluster expansions
as was done in~\cite{yarot}.

Let $H_0$ and $W$ be any Hamiltonians acting on some Hilbert space $\calH$. We shall say that $W$ is {\em relatively bounded} by $H_0$
iff there exist $0\le b<1$  such that
\be
\label{rel_bound}
\| W \psi \| \le b\,  \| H_0 \psi\| \quad \mbox{for all $|\psi\ra \in \calH$}.
\ee

The following lemma asserts that a relatively bounded perturbation can change
eigenvalues of $H_0$ at most by a factor $1\pm b$.
\begin{lemma}
\label{lemma:relb}
Suppose $W$ is relatively bounded by $H_0$.
Then the spectrum of $H_0+W$ is contained in the union
of intervals $[\lambda_0(1-b),\lambda_0(1+b)]$ where $\lambda_0$
runs over the spectrum of $H_0$.
\end{lemma}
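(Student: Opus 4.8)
The plan is to show directly that every eigenvalue $\lambda$ of $H_0+W$ lies in one of the intervals $[\lambda_0(1-b),\lambda_0(1+b)]$. Since $H_0$ and $W$ are Hamiltonians (hence Hermitian) on the finite-dimensional space relevant here, $H_0+W$ is Hermitian and its spectrum consists of eigenvalues; let $(H_0+W)\psi=\lambda\psi$ for some unit vector $\psi$. Rearranging gives $(H_0-\lambda)\psi=-W\psi$, and the relative bound Eq.~(\ref{rel_bound}) yields $\|(H_0-\lambda)\psi\|=\|W\psi\|\le b\,\|H_0\psi\|$.

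Next I would diagonalize $H_0$: writing $\psi=\sum_j c_j\phi_j$ in an orthonormal eigenbasis with $H_0\phi_j=\lambda_j\phi_j$, we get $\|(H_0-\lambda)\psi\|^2=\sum_j|c_j|^2(\lambda_j-\lambda)^2$ and $\|H_0\psi\|^2=\sum_j|c_j|^2\lambda_j^2$, so the displayed inequality becomes $\sum_j|c_j|^2[(\lambda_j-\lambda)^2-b^2\lambda_j^2]\le 0$. As $\psi\neq 0$, at least one index $j$ with $c_j\neq 0$ must satisfy $(\lambda_j-\lambda)^2\le b^2\lambda_j^2$, that is $|\lambda-\lambda_j|\le b|\lambda_j|$. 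Since $H_0$ is a sum of projectors all of its eigenvalues obey $\lambda_j\ge 0$, so this is precisely the statement $\lambda\in[\lambda_j(1-b),\lambda_j(1+b)]$, which proves the lemma.

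An equivalent route, if one wants to avoid invoking eigenvectors, is via the resolvent: if $\lambda$ avoids every interval then in particular $\lambda\notin\spect(H_0)$, so $H_0-\lambda$ is invertible and $H_0+W-\lambda=(I+W(H_0-\lambda)^{-1})(H_0-\lambda)$, where the first factor is invertible because $\|W(H_0-\lambda)^{-1}\|\le b\,\|H_0(H_0-\lambda)^{-1}\|=b\max_{\lambda_0\in\spect(H_0)}|\lambda_0|/|\lambda_0-\lambda|<1$. I do not expect a genuine obstacle in either approach; the only mildly delicate point—upgrading the pointwise inequalities $|\lambda-\lambda_0|>b|\lambda_0|$ to a strict bound on the supremum in the resolvent argument—is immediate from finiteness (or compactness) of $\spect(H_0)$, and the eigenvector argument sidesteps it entirely.
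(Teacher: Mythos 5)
Your main argument is essentially the paper's own proof: take an eigenvector of $H_0+W$, rewrite the eigenvalue equation as $(H_0-\lambda)\psi=-W\psi$, apply the relative bound, and expand in the spectral decomposition of $H_0$ to conclude that some eigenvalue $\lambda_0$ satisfies $(\lambda_0-\lambda)^2\le b^2\lambda_0^2$; your explicit remark that $\lambda_0\ge 0$ (so the inequality reads $\lambda_0(1-b)\le\lambda\le\lambda_0(1+b)$) just makes precise a step the paper leaves implicit. The resolvent variant you sketch is also sound but is an optional alternative, not needed here.
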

\begin{proof}
Indeed, suppose $(H_0+W)\, |\psi\ra = \lambda\, |\psi\ra$, that is,
\be
(H_0-\lambda \, I) \, |\psi\ra = -W\, |\psi\ra.
\ee
The relative boundness then implies $\| (H_0-\lambda \, I) \psi\| \le b \| H_0 \psi\|$, that is,
\be
\label{aux00}
\la \psi| (H_0-\lambda \, I)^2 |\psi\ra  \le b^2 \la \psi| H_0^2  |\psi\ra.
\ee
Let $H_0=\sum_{\lambda_0} \lambda_0 P_{\lambda_0}$ be the spectral decomposition of $H_0$.
Here the sum runs over the spectrum  of $H_0$ and $P_{\lambda_0}$ is a projector
onto the eigenspace with an  eigenvalue $\lambda_0$.
Define a probability distribution $p(\lambda_0)=\la \psi|P_{\lambda_0} |\psi\ra$.
Substituting it into Eq.~(\ref{aux00}) one gets
\be
\sum_{\lambda_0} (\lambda_0-\lambda)^2  \, p(\lambda_0) \le \sum_{\lambda_0} b^2 \lambda_0^2 \, p(\lambda_0).
\ee
Therefore there exists at least one eigenvalue $\lambda_0$ such that
\be
(\lambda_0-\lambda)^2 \le b^2 \lambda_0^2.
\ee
This is equivalent to $\lambda_0(1-b)\le \lambda\le \lambda_0(1+b)$.
\end{proof}

In the rest of this section we proof stability under locally block-diagonal perturbations.
Let $H_0$ be a Hamiltonian satisfying TQO-1,2.
\begin{lemma}
\label{lemma:lbd}
Let $V$ be a locally block-diagonal perturbation with strength $J$.
Then the spectrum of $H_0+V$ is contained in the union
of intervals  $\bigcup_{k\ge 0} I_k$, where $k$ runs over the spectrum of $H_0$
and
\[
I_k = \{ \lambda\in \RR\, : \, k(1-b)-\delta \le \lambda \le k(1+b)+\delta\}
\]
for some $b=O(J)$ and for some $\delta$ decaying faster than any power of $L^*$.
\end{lemma}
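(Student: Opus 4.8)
The plan is to reduce Lemma~\ref{lemma:lbd} to Lemma~\ref{lemma:relb} by exhibiting the locally block-diagonal perturbation $V$ (after a harmless modification) as an operator relatively bounded by $H_0$, with relative bound $b=O(J)$. First I would use TQO-1 to subtract constants: for each term $V_{r,A}$ with $r\le L^*$ write $V_{r,A} = c_{r,A} P_{b_2(A)} + V'_{r,A}$ where, by Corollary~\ref{localrho} applied to $O=V_{r,A}$ restricted to the ground subspace (using $[V_{r,A},P]=0$), we have $V_{r,A}P = c_{r,A}P$ and hence $V'_{r,A}P=0$. Collecting the constants $\sum_{r,A} c_{r,A}$ gives an overall energy shift, and the tail $\sum_{r>L^*}V_{r,A}$ contributes an operator of norm decaying faster than any power of $L^*$, which is the source of $\delta$; treat it as part of a perturbation whose norm is absorbed into the final interval width. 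So without loss of generality $V$ is locally block-diagonal, has strength $O(J)$, and satisfies $V_{r,A}P_{b_2(A)}=0$ for every term (using TQO-2 and the commutation $[V_{r,A},P]=0$, exactly as in the proof of Corollary~\ref{localrho} one upgrades $V_{r,A}P=0$ to $V_{r,A}P_{b_2(A)}=0$).

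\textbf{The relative bound.} The key estimate is that for a single term $V_{r,A}$ with $V_{r,A}P_{B}=0$, where $B=b_2(A)$, one has $\|V_{r,A}\psi\| \le \|V_{r,A}\|\,\|Q_B\psi\|$ for all $\psi$, since $V_{r,A}\psi = V_{r,A}(I-P_B)\psi = V_{r,A}Q_B\psi$. Now $Q_B = I-P_B$ is supported on the square $B$ of size $r+4$, and $Q_B$ is dominated by the sum of the $Q_{A'}$ over $2\times 2$ squares $A'\subseteq B$: since the $Q_{A'}$ are commuting projectors, $Q_B \le \sum_{A'\subseteq B} Q_{A'}$, so $\|Q_B\psi\|^2 \le \la\psi|\sum_{A'\subseteq B}Q_{A'}|\psi\ra$. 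Summing over all terms $V_{r,A}$ and using the triangle inequality plus Cauchy--Schwarz (distributing the count of how many squares $B$ contain a fixed $A'$, which grows only polynomially in $r$ while $\|V_{r,A}\|\le Jf(r)$ decays superpolynomially), one obtains $\|V\psi\| \le \sum_{r,A}\|V_{r,A}\|\,\|Q_{b_2(A)}\psi\| \le b\,\sqrt{\la\psi|H_0|\psi\ra}\,\sqrt{\la\psi|\psi\ra}$ for a suitable $b=O(J)$; but actually one wants $\|V\psi\|\le b\|H_0\psi\|$, which follows because $\la\psi|H_0|\psi\ra \le \|H_0\psi\|\,\|\psi\|$ and, more carefully, because $H_0$ is a sum of commuting projectors so $\la\psi|H_0|\psi\ra \le \la\psi|H_0^2|\psi\ra^{1/2}\la\psi|\psi\ra^{1/2}$ and one can bootstrap. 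I would instead arrange the bound directly in the form $\|V\psi\|\le b\|H_0\psi\|$ by writing $V=\sum V_{r,A}$, bounding $\|V_{r,A}\psi\|\le\|V_{r,A}\|\,\|Q_{b_2(A)}\psi\|$, and then noting $\sum_{r,A}\lambda_{r,A}\|Q_{b_2(A)}\psi\| \le (\sum\lambda_{r,A})^{1/2}(\sum\lambda_{r,A}\|Q_{b_2(A)}\psi\|^2)^{1/2}$ and $\sum_{r,A}\lambda_{r,A}\|Q_{b_2(A)}\psi\|^2 \le C\sum_{A'}\la\psi|Q_{A'}|\psi\ra = C\,\la\psi|H_0|\psi\ra \le C\,\|H_0^{1/2}\psi\|^2$, then using $\|H_0^{1/2}\psi\|\le\|H_0\psi\|^{1/2}\|\psi\|^{1/2}$ together with the fact that on the relevant subspace $H_0\ge I$ on its support — here one exploits that $H_0$ has integer spectrum, so $H_0^{1/2}\psi$ and $H_0\psi$ have comparable norms on the orthogonal complement of $P$, giving $\|H_0^{1/2}\psi\|\le\|H_0\psi\|$ since every nonzero eigenvalue $k$ of $H_0$ satisfies $\sqrt k\le k$.

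\textbf{Conclusion and main obstacle.} With the relative bound $\|V\psi\|\le b\|H_0\psi\|$ in hand for the modified (tail-removed, constant-shifted) perturbation, Lemma~\ref{lemma:relb} places the spectrum of $H_0+V$ (up to the overall shift from the subtracted constants) in $\bigcup_k [k(1-b),k(1+b)]$; reinstating the superpolynomially small tail operator $\Delta$ with $\|\Delta\|=\delta$ widens each interval to $[k(1-b)-\delta,k(1+b)+\delta]$, which is exactly $I_k$. The main obstacle is the bookkeeping in the relative-bound estimate: one must verify that the combinatorial factor counting how many terms $V_{r,A}$ can "charge" a given elementary projector $Q_{A'}$, weighted by the superpolynomially decaying $\|V_{r,A}\|$, sums to $O(J)$ independent of $L$, and one must handle the $\|H_0^{1/2}\psi\|$ versus $\|H_0\psi\|$ comparison cleanly — the integrality of the spectrum of $H_0$ is what makes $\|H_0^{1/2}\psi\|\le\|H_0\psi\|$ hold, and this is the one place where the structure "sum of commuting projectors" is genuinely used rather than just "bounded local Hamiltonian." Everything else is the triangle inequality, Cauchy--Schwarz, and the operator inequality $Q_B\le\sum_{A'\subseteq B}Q_{A'}$ for commuting projectors.
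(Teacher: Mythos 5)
Your overall architecture (subtract constants via TQO-1, push the $r>L^*$ tail into $\delta$, upgrade $V_{r,A}P=0$ to $V_{r,A}P_{b_2(A)}=0$ via TQO-2, then invoke Lemma~\ref{lemma:relb}) matches the paper, and your auxiliary facts ($Q_B\le\sum_{A'\subseteq B}Q_{A'}$, $H_0\le H_0^2$ from integrality of the spectrum) are correct. But the core estimate has a genuine gap: the chain ``triangle inequality over terms, then Cauchy--Schwarz against $\la\psi|H_0|\psi\ra$'' cannot produce a volume-independent $b$. With your weighting, the first Cauchy--Schwarz factor is $\bigl(\sum_{r,A}\|V_{r,A}\|\bigr)^{1/2}=O(\sqrt{J}\,L)$ in $D=2$, since for each $r$ there are $L^2$ squares; so you get $b=O(JL)$, not $O(J)$. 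Moreover this loss is not repairable downstream of the triangle inequality: take $\psi$ to be an equal superposition of $\sim L^2$ mutually orthogonal states, each containing a single localized excitation (or excitation pair) at a different position. Then $\|H_0\psi\|=O(1)$, while $\|Q_{b_2(A)}\psi\|\sim 1/L$ for essentially every unit square $A$, so $\sum_{A\in\calS(1)}\|V_{1,A}\|\,\|Q_{b_2(A)}\psi\|\sim J L^2\cdot(1/L)=JL$ already for a perturbation with $\|V_{1,A}\|=J$ everywhere. Hence the intermediate quantity $\sum_{r,A}\|V_{r,A}\|\,\|Q_{b_2(A)}\psi\|$ is genuinely of order $JL$ on a state of energy $O(1)$, and no choice of weights in Cauchy--Schwarz afterwards can recover $b=O(J)$. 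The term-wise triangle inequality throws away exactly the cancellation that makes the lemma true, namely that $V_{r,A}\psi$ for well-separated $A$ land in nearly orthogonal excitation sectors; this is also why your deferred ``bookkeeping'' (how many terms charge a fixed $Q_{A'}$) controls only the second factor, not the first.

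The paper's proof is built to capture precisely this orthogonality. For each fixed $r$ it sets $W(r)=\sum_{A\in\calS(r)}V_{r,A}$, partitions $\Lambda$ into boxes $B_a$ of size $r$, and introduces the orthogonal resolution $I=\sum_\calY R_\calY$ labeled by which boxes are ``occupied'' (excited). TQO-2 then shows that $R_\calY V_{r,A} R_\calZ\ne 0$ only if both $\calY$ and $\calZ$ have an occupied box within $O(1)$ of $A$ and differ only at boxes overlapping $A$, so for a configuration $\calY$ with $k$ occupied boxes only $O(kr^2)$ pairs $(A,\calZ)$ contribute. Expanding $\la\psi|W(r)^2|\psi\ra$ as a triple sum over sectors and using these constraints gives $\la\psi|W(r)^2|\psi\ra\le O(J^2f(r)^2r^4)\sum_\calY|\calY|^2\,\la\psi|R_\calY|\psi\ra\le O(J^2f(r)^2r^4)\,\la\psi|H_0^2|\psi\ra$, i.e.\ each sector is charged in proportion to its number of defects squared, which is exactly dominated by $H_0^2$ --- rather than charging every lattice location uniformly as the triangle inequality does. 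This yields $b(r)=O(Jr^2f(r))$ and, summing over $r\le L^*$, the size-independent $b=O(J)$; the pieces of your argument outside this core estimate (the shift, the tail giving $\delta$, and Lemma~\ref{lemma:relb}) then go through as you describe.
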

\begin{proof}
For any integer $r\ge 1$ define
\[
W(r)=\sum_{A\in \calS(r)} V_{r,A}.
\]
By assumptions of the lemma we have
\be
\label{decay1}
\|V_{r,A}\| \le J f(r)
\ee
for some function $f$ decaying faster than any power and
$[V_{r,A},P]=0$. Performing an overall energy shift and using TQO-1 we can assume that
\be
\label{annihilate}
V_{r,A}P=PV_{r,A}=0 \quad \mbox{for all $1\le r\le L^*$ and for all $A\in \calS(r)$}
\ee
\begin{prop}
Suppose $1\le r\le L^*$. Then
$W(r)$ is relatively bounded by $H_0$ with a constant $b(r)=O(J r^2 f(r))$.
\end{prop}
This proposition immediately implies that $\sum_{1\le r\le L^*} W(r)$ is relatively bounded by $H_0$
with a constant $b=\sum_{r=1}^{L^*} b(r)=O(J)$.
Lemma~\ref{lemma:relb} then implies that the spectrum of
$H_0+\sum_{1\le r\le L^*} W(r)$ is contained in the union of intervals
$[k(1-b),k(1+b)]$.
Treating the residual terms $\sum_{r>L^*} W(r)$
using the standard perturbation theory then leads to the desired result.
\begin{proof}[\bf Proof of the proposition]
Let $\Lambda=B_1\cup B_2 \cup \ldots \cup B_M$ be a partition of the lattice into contiguous squares of size $r$ such that any $2\times 2$ square is contained in exactly one square $B_a$.
If a $2\times 2$ square is on the boundary of two, or more, of the $B_a$, we include it in the bottom left one in the partition. Moreover, if $r$ does not divide $L$, then squares at the boundary may be truncated to rectangles, to fill in the partition.
We refer to $B_a$ as {\em boxes} to distinguish them from the squares involved in the decomposition of $V$.
For any binary string $\calY\in \{0,1\}^M$ define a projector
\[
R_\calY = \prod_{a=1}^M  \left[ \calY_a Q_{B_a}  + (1-\calY_a) P_{B_a}\right].
\]
Clearly the family of projectors $R_\calY$ defines an orthogonal decomposition
of the Hilbert space, that is, $\sum_\calY  R_\calY=I$.
Given a string $\calY\in \{0,1\}^M$, we shall say that a box $B_a$ is {\em occupied}
iff $\calY_a=1$.

We claim that
any operator $V_{r,A}$ acting on a square $A\in \calS(r)$ and  satisfying Eq.~(\ref{annihilate}) has only a few
off-diagonal blocks with respect to this decomposition. Specifically, TQO-2 implies that
\be
\label{trans}
R_\calY V_{r,A} R_\calZ\ne 0
\ee
only if $A$ has distance $O(1)$ from some occupied box in $\calY$ {\em and}
$A$ has distance $O(1)$ from some occupied box in $\calZ$, {\em and}
the configurations $\calY,\calZ$ differ only at those boxes that overlap with $A$.
Clearly, for any fixed $\calY$ such that $\calY$ has $k$ occupied boxes
the number of pairs $(A\in \calS(r), \calZ)$ that could satisfy Eq.~(\ref{trans})
is at most $O(kr^2)$.
Denote for simplicity
\[
W\equiv W(r), \quad w\equiv J f(r).
\]
For any state $|\psi\ra$ we get
\bea
\la \psi|W^2|\psi\ra &= &  \sum_{\calY,\calZ,\calV\subseteq [M]} \la \psi|R_\calY W R_\calZ W R_\calV |\psi\ra   \nn \\
&\le &
\sum_{\calY,\calZ,\calV} \| R_\calY W R_\calZ \| \cdot \| R_\calZ W R_\calV \| \cdot
\| R_\calY \psi \| \cdot \| R_\calV \psi\| \nn \\
&\le &
\sum_{\calY,\calZ,\calV} \| R_\calY W R_\calZ \| \cdot \| R_\calZ W R_\calV \| \cdot \frac12 (\la \psi|R_\calY |\psi\ra+
\la \psi|R_\calV|\psi\ra )\nn \\
&= & \sum_{\calY,\calZ,\calV}  \| R_\calY W R_\calZ \| \cdot \| R_\calZ W R_\calV \| \cdot \la \psi|R_\calY |\psi\ra \nn \\
& \le & \sum_{k\ge 0}\; \;  \sum_{\calY\, : \, |\calY|=k} \; \; O(k^2 r^4 w^2) \la \psi|R_\calY |\psi\ra = O(w^2 r^4 ) \la \psi|G|\psi\ra,
\label{W^2}
\eea
where
\be
G=\sum_{k\ge 0} \; \; \sum_{\calY\, : \, |\calY|=k} \; \; k^2 R_\calY.
\ee
The  inequality Eq.~(\ref{W^2}) follows from the fact that $\calY$ and $\calZ$ differ at at most $O(1)$ boxes
and an obvious bound $k(k+O(1))=O(k^2)$. Finally, note that $G\le H_0^2$ since
any  configuration with $k$ occupied boxes must have at least $k$ defects (squares $A\in \calS(2)$
for which $Q_A$ has eigenvalue $1$)
and since creating a defect costs at least a unit of energy.
We arrive at
\be
\la \psi|W^2|\psi\ra \le b^2 \la \psi|H_0^2 |\psi\ra, \quad b=O(wr^2).
\ee
This completes the proof.

\end{proof}
\end{proof}

\section{Exact quasi-adiabatic continuation}
\label{sec:continue}
We define a continuous family of Hamiltonians,
\be
H_s=H_0+sV,
\ee
so that as $s$ varies from $0$ to $1$, $H_s$ continuously interpolates between $H_0$ and the perturbed Hamiltonian.

We define a quasi-adiabatic continuation operator,
${\cal D}_s$ by
\be
i{\cal D}_s \equiv \int {\rm d}t F(t) \exp(i H_s t) \Bigl( \partial_s H_s \Bigr) \exp(-i H_s t),
\ee
where the function $F(t)$ is defined to have the following properties.
First, the Fourier transform of $F(t)$, which we denote $\tilde F(\omega)$, obeys
\be
\label{correct}
|\omega| \geq 1/2 \quad \rightarrow \quad \tilde F(\omega)=-1/\omega.
\ee
Second, $\tilde F(\omega)$ is infinitely differentiable, so that $F(t)$ decays faster than any power of $|t|$.
Third, $F(t)=-F(-t)$, so that ${\cal D}_s$ is Hermitian.

We define a unitary operator $U_s$ by
\be
\label{Usdef}
U_s \equiv {\cal S}' \; \exp(i\int_0^s {\rm d}s' {\cal D}_{s'} ),
\ee
where the notation ${\cal S}'$ denotes that the above equation (\ref{Usdef}) is an $s'$-ordered exponential.

The motivation for defining the above unitary operator is contained in the following lemma.
\begin{lemma}
\label{lemma:samespace}
Let  $H_s$ be a differentiable family of Hamiltonians.
Let
$|\Psi^i(s)\rangle$ denote eigenstates of $H_s$ with energies $E_i$.
Let $E_{min}(s)<E_{max}(s)$ be continuous functions of $s$.
Define a projector $P(s)$ onto an eigenspace of $H_s$ by
\be
P(s)=\sum_{i}^{E_i\in [E_{min}(s),E_{max}(s)]} |\Psi^i(s)\rangle \langle \Psi^i(s)|.
\ee
Assume that the
space that $P(s)$ projects onto is separated from the
rest of the spectrum by a gap of at least $1/2$ for all $s$ with $0\leq s \leq 1$.
That is, all eigenvalues of $H_s$ are either in the interval
$[E_{min}(s),E_{max}(s)]$, or are separated by at least $1/2$ from this interval.
Then, for all $s$ with $0\leq s \leq 1$, we have
\be
\label{result}
P(s)=U_s P(0) U_s^\dagger.
\ee
\begin{proof}
This is lemma 7.1 in \cite{prev}.  It is also well-known in the theory of quasi-adiabatic continuation.
\end{proof}
\end{lemma}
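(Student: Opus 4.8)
The plan is to show that $P(s)$ and $\tilde P(s)\equiv U_s P(0) U_s^\dagger$ solve the same first-order linear differential equation in $s$ with the same value at $s=0$, and then conclude by uniqueness of solutions. Since we work on a finite-dimensional Hilbert space, all the manipulations below (resolvent integrals, ordered exponentials, linear ODEs, exchanging integration and matrix elements) are unproblematic; the integral defining $\calD_s$ converges because $F(t)$ decays faster than any power.

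First, the equation for $\tilde P$. By construction the $s'$-ordered exponential in (\ref{Usdef}) satisfies $\partial_s U_s = i\calD_s U_s$ with $U_0=I$, and $U_s$ is unitary because $\calD_s$ is Hermitian; differentiating $\tilde P(s)=U_s P(0)U_s^\dagger$ gives
\be
\partial_s \tilde P(s) = i\,[\calD_s,\tilde P(s)], \qquad \tilde P(0)=P(0).
\ee
Next, the equation for $P(s)$. Because the band $[E_{min}(s),E_{max}(s)]$ is separated from the rest of the spectrum by a gap of at least $1/2$, for $s$ in a neighborhood of any point there is a fixed contour $\Gamma$ in the complex plane enclosing exactly the band eigenvalues, so that $P(s)=\frac{1}{2\pi i}\oint_\Gamma (z-H_s)^{-1}\,dz$; hence $P(s)$ is differentiable, and its rank, being a continuous integer-valued function, is constant (no eigenvalue can cross the gap by continuity). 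Differentiating $P^2=P$ shows $\partial_s P$ is block-off-diagonal with respect to $\calH = P\calH\oplus(I-P)\calH$, and differentiating $[H_s,P]=0$ gives $[H_s,\partial_s P]=[P,\partial_s H_s]$. Evaluating the latter in an orthonormal eigenbasis $H_s|\Psi^j\rangle=E_j|\Psi^j\rangle$ at a fixed $s$ (individual eigenvectors need not be chosen differentiably, but this is irrelevant here) yields, for $|\Psi^j\rangle$ in the band and $|\Psi^k\rangle$ outside it,
\be
\langle\Psi^j|\partial_s P|\Psi^k\rangle = \frac{\langle\Psi^j|\partial_s H_s|\Psi^k\rangle}{E_j-E_k},
\ee
with the opposite ordering obtained by Hermiticity, and all remaining matrix elements of $\partial_s P$ vanishing.

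Now I would compare this with $i[\calD_s,P]$. From the definition of $\calD_s$ and the Fourier convention $\tilde F(\omega)=\int dt\,F(t)e^{i\omega t}$, the matrix elements are $\langle\Psi^j|i\calD_s|\Psi^k\rangle=\tilde F(E_j-E_k)\,\langle\Psi^j|\partial_s H_s|\Psi^k\rangle$. When $|\Psi^j\rangle$ and $|\Psi^k\rangle$ lie on opposite sides of the gap we have $|E_j-E_k|\ge 1/2$, so property (\ref{correct}) forces $\tilde F(E_j-E_k)=-1/(E_j-E_k)$; a one-line computation then gives $\langle\Psi^j|i[\calD_s,P]|\Psi^k\rangle=-\tilde F(E_j-E_k)\langle\Psi^j|\partial_s H_s|\Psi^k\rangle=\langle\Psi^j|\partial_s P|\Psi^k\rangle$ on exactly that sector, while both operators vanish on the complementary block-diagonal sector (in particular within any degenerate eigenspace, which cannot straddle the band boundary). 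Hence $\partial_s P(s)=i[\calD_s,P(s)]$ with $P(0)=P(0)$, the same linear ODE and initial condition as satisfied by $\tilde P$, so $P(s)=U_s P(0)U_s^\dagger$ for all $s\in[0,1]$.

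The main obstacle is not depth but care: one must handle degeneracies within the band (sidestepped by proving smoothness of $P(s)$ through its resolvent representation rather than through eigenvectors) and keep track of signs. The real content is the single observation that the gap hypothesis is precisely tuned to the filter property (\ref{correct}), so that $\tilde F$ acts as the exact inverse Liouvillian $-1/\omega$ on exactly the sector where $\partial_s P$ is supported; everything else is bookkeeping. This is the standard quasi-adiabatic continuation argument, cf.\ \cite{hwen,tjo,prev}.
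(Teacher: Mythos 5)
Your proof is correct, and it is precisely the standard quasi-adiabatic continuation argument (show $P(s)$ and $U_sP(0)U_s^\dagger$ satisfy the same ODE $\partial_s X = i[\calD_s,X]$, using the gap and the filter property Eq.~(\ref{correct}) to identify $\partial_s P$ with $i[\calD_s,P]$) that the paper invokes by citing Lemma~7.1 of \cite{prev} rather than reproducing. In effect you have supplied the proof the paper omits, with the degeneracy and smoothness issues handled correctly via the resolvent representation and the block structure from $P^2=P$.
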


\section{Stability proof}
\label{sec:proof}

In this section we prove Theorem~\ref{thm:main}.
For any $s\in [0,1]$ define
\[
H_s=H_0+sV.
\]
Let $g$ be the ground state degeneracy of $H_0$. Choose $E_{min}(s)$ as the smallest eigenvalue
of $H_s$ and let $E_{max}(s)$ be the $g$-th smallest eigenvalue of $H_s$ (taking into account multiplicities).
Finally, let $\Delta(s)$ be the spectral gap separating eigenvalues of $H_s$
in the interval $[E_{min}(s),E_{max}(s)]$ from the rest of the spectrum.
We shall choose the constants $J_0,c_1$ sufficiently small such that the interval $I_0$
is separated from $I_k$, $k>0$ by a gap at least $3/4$.
Then the theorem implies that the interval $[E_{min}(s),E_{max}(s)]$ is contained in $I_0$
for all $s\in [0,1]$
and hence
\be
\label{34bound}
\Delta(s) \ge  \frac34 \quad \mbox{for all $0\le s\le 1$}.
\ee
Suppose we have already proved the theorem for the special case when
\be
\label{12bound}
\Delta(s)\ge \frac12 \quad \mbox{for all $s\in [0,1]$}.
\ee
In the remaining case there must exist $s^*\in [0,1)$ such that $\Delta(s)\ge 1/2$
for $s\in [0,s^*]$ and $\Delta(s^*) =1/2$ (use the fact that $\Delta(0)\ge 1$
and continuity of $\Delta(s)$).  Applying the theorem to a perturbation
$s^*V$ which satisfies Eq.~(\ref{12bound}) we conclude that $\Delta(s^*)\ge 3/4$
obtaining a contradiction.
Thus it suffices to prove the theorem for the case Eq.~(\ref{12bound}).

Define
\be
H_s'=U_s^\dag (H_0+sV) U_s=U_s^\dag H_s U_s,
\ee
where
\be
\label{Us}
U_s \equiv {\cal S}' \; \exp(i\int_0^s {\rm d}s' {\cal D}_{s'} ),
\ee
is the exact adiabatic continuation operator constructed in Section~\ref{sec:continue}.
Since we assumed $\Delta(s)\ge 1/2$ for all $s$, Lemma~\ref{lemma:samespace}
implies that
\be
\label{com1}
[H_s',P]=0.
\ee
We can represent $H_s'$ as
\be
H_s'=H_0+V',
\ee
where
\be
V'=U_s^\dag H_0 U_s - H_0 + sU_s^\dag VU_s.
\ee
Lemma~\ref{lemma:LRfast} implies that $\calD_s$ has strength $O(J)=O(1)$.
Applying Lemma~\ref{lemma:LRslow}  we conclude that $sU_s^\dag VU_s$
has strength $O(J)$.
Let us now focus on the term $U_s^\dag H_0 U_s - H_0$. We use
\be
U_s^\dagger H_0 U_s-H_0 = -i\int_0^s {\rm d}s'  \, U_{s'}^\dag [{\cal D}_{s'},H_0] U_{s'}.
\ee
Since $\calD_{s'}$ has strength $O(J)$,  the commutator $[{\cal D}_{s'},H_0]$
also has strength $O(J)$. Applying Lemma~\ref{lemma:LRslow} to the unitary evolution
$U_{s'}^\dag$, we infer that $U_s^\dagger H_0 U_s-H_0$ has strength $O(J)$.
To conclude, we have shown that $V'$ has strength $O(J)$, and Eq.~(\ref{com1}) implies
\be
[V',P]=0,
\ee
that is, $V'$ is a globally block-diagonal perturbation with strength $O(J)$.
In Lemma~\ref{rewrite1} below, we will show that we can rewrite
\be
\label{rw1}
H_s'=H_0+V'=
H_0 + \sum_{u \in \Lambda} X_u
\ee
where $X_u$ obeys $\quad [X_u, P]=0$,
where $X_u$ has strength $O(J)$ and its support is near $u$.
Applying
Lemma~\ref{lemma:solveloc} to Eq.~(\ref{rw1}) implies that $H_s'$ can be written in the form
\be
\label{lbddec}
H_s'=H_0+V'=H_0+V''+\Delta,
\ee
where $V''$ is locally block-diagonal with strength $O(J)$ and $\|\Delta\|$
decays faster than any power of $L^*$.
The statement of the theorem then follows straightforwardly from Lemma~\ref{lemma:lbd}
which asserts that $V''$ is relatively bounded by $H_0$ with a small error
decaying faster than any power of $L^*$.

The Lemma~\ref{rewrite1} that we need follows the idea in~\cite{kitaev} to write a Hamiltonian of a gapped system as a sum of terms such that the ground states are
eigenvectors of each term separately (in \cite{solvloc} a related idea of writing it so that the ground state was an approximate eigenvector
of each term separately was considered).  The properties of $H_s'$ that we use are that it is globally block diagonal, it has
a spectral gap $\geq 1/2$,
the perturbation $V'$ has strength $J$, and that it is unitarily related by $U_s$ to a Hamiltonian with a decay rate $\mu>0$.
\begin{lemma}
\label{rewrite1}
Let $H_s'$ be defined as above.  Then, we can re-write
\be
H_s'=H_0+ \sum_{u \in \Lambda} X_u
\ee
where $X_u$ obeys $\quad [X_u, P]=0$,
where $X_u$ has strength $O(J)$ and its support is near $u$.
\begin{proof}
We start from representing $V'$ as $V'=\sum_{u\in \Lambda} V_u$, where
$V_u$ includes only interactions affecting a site $u$. Then $V_u$ has strength $J$ and its support is near $u$.
We set
\be
\tilde{V}_u = \int_{-\infty}^\infty  dt \, g(t) \exp(i H_s' t) V_u \exp(-i H_s' t),
\ee
where $g(t)$ is a function satisfying $g(-t)=g(t)^*$  such that its Fourier transform $\tilde g(\omega)$ is infinitely differentiable, has $\tilde g(0)=1$, and
$\tilde g(\omega)=0$ for $|\omega|\geq 1/2$.
Define
\be
\tilde Q_A = \int_{-\infty}^\infty  dt\, g(t)  \exp(i H_s' t) Q_A \exp(-i H_s' t).
\ee
Then,
\be
H_s'=\int_{-\infty}^\infty dt\, g(t)  \exp(i H_s' t) H_s' \exp(-i H_s' t)=
\sum_{A\in \calS(2)} \tilde Q_A + \sum_{u\in \Lambda}  \tilde{V}_u.
\ee
By construction of $\tilde{g}(\omega)$ we have $(1-P) \tilde Q_A P = (1-P) \tilde{V}_u P =0$.
Hence both $\tilde{Q}_A$ and $\tilde{V}_u$ preserve $P$.
Using the definition of $\tilde{V}_u$ we get
\be
\label{tildeVu}
U_s \tilde{V}_u U_s^\dag  = \int_{-\infty}^\infty  dt \, g(t) \exp(i H_s t) U_s V_u U_s^\dag  \exp(-i H_s t).
\ee
Recall that $\calD_s$ has strength $O(1)$.
Thus we can apply Lemma~\ref{lemma:LRslow}
to the unitary evolution $U_s$ to infer that $U_s V_u U_s^\dag$ has strength $O(J)$.
Because $\tilde g(\omega)$ is infinitely differentiable, $g(t)$ decays faster than any power.
Also, by assumptions of the theorem,  $H_s$ has strength $O(1)$
and decay rate $\mu>0$.
Hence we can apply Lemma~\ref{lemma:LRfast}
to the unitary evolution $\exp(i H_s t)$ to infer that $U_s \tilde{V}_u U_s^\dag$ has strength $O(J)$,
Finally, applying Lemma~\ref{lemma:LRslow} to the unitary evolution $U_s^\dag$
we infer that $\tilde{V}_u$ has strength $O(J)$.
In addition, $\tilde{V}_u$ has support near $u$ since all Hamiltonians obtained at the intermediate steps
have support near $u$, see Lemmas~\ref{lemma:LRslow},\ref{lemma:LRfast}.

We now consider the terms $\tilde Q_A$.  We have
\begin{eqnarray}
\tilde Q_A & = & \int_{-\infty}^{\infty} dt\,g(t) \exp(i H_s' t) Q_A \exp(-i H_s' t)  \\ \nonumber
&=& Q_A+ i\int_{-\infty}^{\infty} dt\, g(t) \int_0^t d t_1\,  \exp{(i H_s' t_1 )} [V,Q_A] \exp{(-iH_s' t_1)} \\ \nonumber
&=& Q_A+ \int_{-\infty}^{\infty} dt_1\, f(t_1)  \exp{(i H_s' t_1)} [V,Q_A] \exp{(-iH_s' t_1)} \nn
\end{eqnarray}
for some function $f(t_1)$ decaying faster than any power.
It follows that
\be
U_s (\tilde{Q}_A - Q_A)U_s^\dag = \int_{-\infty}^{\infty} dt_1\, f(t_1)  \exp{(i H_s t_1)} U_s [V,Q_A] U_s^\dag  \exp{(-iH_s t_1)}.
\ee
Applying the same arguments as in the analysis of Eq.~(\ref{tildeVu})
we conclude that  $\tilde{Q}_A-Q_A$
has strength $O(J)$ and has support near $u(A)$ --- the center of the square $A$.
In addition, $\tilde{Q}_A-Q_A$ commutes with $P$ since both
$\tilde{Q}_A$ and $Q_A$ do. Let us define
$X_u=\tilde{V}_u + (\tilde{Q}_A-Q_A)$ where $A$ is the square centered at $u$.
\end{proof}
\end{lemma}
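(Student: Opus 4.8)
The plan is to rewrite $H_s'$ using a "smearing" integral against a function $g(t)$ whose Fourier transform is supported in $|\omega|<1/2$ and equals $1$ at $\omega=0$. Since $H_s'$ has a spectral gap $\ge 1/2$ separating the low-lying band (on which $H_s'$ acts, after the energy shift, as something of small norm) from the rest of the spectrum, convolving any operator $O$ with $g$ kills the matrix elements of $O$ between the low band and the high band, hence produces an operator that preserves $P$. Applying this smearing to each term in the decomposition $H_s' = \sum_{A} Q_A + V'$ gives $H_s' = \sum_A \tilde Q_A + \sum_u \tilde V_u$ where each $\tilde Q_A$ and each $\tilde V_u$ now commutes with $P$ by construction. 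This handles the global-to-local block-diagonality at the level of individual terms.

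The next step is to control the strength and support of the smeared terms. For $\tilde V_u$ the key is the identity Eq.~(\ref{tildeVu}): conjugating by $U_s$ turns the Heisenberg evolution under $H_s'$ into evolution under $H_s = H_0 + sV$, which has exponentially decaying interactions and hence a finite Lieb-Robinson velocity. So I would: (i) use Lemma~\ref{lemma:LRslow} on the finite-time flow $U_s$ (whose generator $\mathcal{D}_s$ has strength $O(1)$ by Lemma~\ref{lemma:LRfast}) to see that $U_s V_u U_s^\dag$ has strength $O(J)$ and support near $u$; (ii) use Lemma~\ref{lemma:LRfast} on the infinite-time flow $\exp(iH_s t)$ against the superpolynomially decaying weight $g(t)$ to see the smeared object still has strength $O(J)$ and support near $u$; (iii) conjugate back by $U_s^\dag$ via Lemma~\ref{lemma:LRslow} once more. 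For $\tilde Q_A$ the same machinery applies after noting $\tilde Q_A - Q_A = \int dt_1\, f(t_1)\, e^{iH_s' t_1}[V,Q_A]e^{-iH_s' t_1}$ with $f$ superpolynomially decaying; since $[V,Q_A]$ has strength $O(J)$ and support near the center $u(A)$ of $A$, the same three-step argument gives $\tilde Q_A - Q_A$ of strength $O(J)$ and support near $u(A)$, and it commutes with $P$ since both $\tilde Q_A$ and $Q_A$ do. Collecting $X_u = \tilde V_u + (\tilde Q_{A(u)} - Q_{A(u)})$, where $A(u)$ is the $2\times 2$ square centered at $u$, yields $H_s' = H_0 + \sum_u X_u$ with each $X_u$ of strength $O(J)$, supported near $u$, and commuting with $P$, which is exactly the claim.

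The main obstacle is the bookkeeping for the $\tilde Q_A$ terms: one must be careful that the original $Q_A$ already commutes with $P$ (it does, since $H_0$ preserves $P$), so that only the \emph{difference} $\tilde Q_A - Q_A$ needs to be estimated, and that this difference is genuinely small in strength — which requires the first-order expansion $\tilde Q_A = Q_A + i\int dt\, g(t)\int_0^t dt_1\, e^{iH_s't_1}[V,Q_A]e^{-iH_s't_1}$ together with absorbing the $t$-integral of $g$ into a new superpolynomially decaying kernel $f(t_1)$. The chain of three Lieb-Robinson applications (conjugate by $U_s$, smear against $g$, conjugate back by $U_s^\dag$) must respect that every intermediate Hamiltonian keeps its support near the relevant site, which is precisely the support-preservation clause in Lemmas~\ref{lemma:LRslow} and~\ref{lemma:LRfast}; once that is checked the rest is routine.
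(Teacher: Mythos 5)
Your proposal is correct and follows essentially the same route as the paper's own proof: the same smearing of $V'=\sum_u V_u$ and of the $Q_A$ against $g(t)$ with $\tilde g$ supported in $|\omega|<1/2$, the same conjugation identity Eq.~(\ref{tildeVu}) enabling the three-step Lieb-Robinson argument (Lemma~\ref{lemma:LRslow}, then Lemma~\ref{lemma:LRfast}, then Lemma~\ref{lemma:LRslow} again), the same first-order expansion showing $\tilde Q_A - Q_A$ has strength $O(J)$, and the same assembly $X_u=\tilde V_u+(\tilde Q_{A(u)}-Q_{A(u)})$. No gaps to report.
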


{\it Acknowledgments---} We thank S. Michalakis for useful discussions and for collaboration on \cite{prev}.
SB was partially  supported by the
DARPA QUEST program under contract number HR0011-09-C-0047.


\begin{thebibliography}{99}
\bibitem{Wen90}
X. G. Wen  and Q. Niu,
{\it ``Ground-state degeneracy of the fractional quantum Hall states in the presence of a random potential and on high-genus Riemann surfaces"},
Phys.~Rev.~{\bf B41}, p.~9377 (1990).

\bibitem{tc} A. Kitaev,
{\it ``Fault-tolerant quantum computation by anyons"},
 Ann.~Phys.~{\bf 303}, 2 (2003).


\bibitem{prev} S. Bravyi, M. B. Hastings, and S. Michalakis, {\it ``Topological quantum order: stability under local perturbations"},
e-print arXiv:1001:0344 (2010).


\bibitem{lr1} E. H. Lieb and D. W. Robinson, {\it ``The finite group velocity of quantum spin systems"},
Comm.~Math.~Phys.~{\bf 28},
p.~251 (1972).

\bibitem{lr2} M.  B. Hastings, {\it ``Lieb-Schultz-Mattis in higher dimensions"},
Phys.~Rev.~{\bf B69}, 104431 (2004).

\bibitem{lr3} B. Nachtergaele and R. Sims, 
{\it ``Lieb-Robinson Bounds and the Exponential Clustering Theorem"},
Commun.~Math.~Phys.~{\bf 265}, 119 (2006).

\bibitem{hwen} M. B. Hastings and Xiao-Gang Wen,
{\it ``Quasi-adiabatic Continuation of Quantum States: The Stability of
Topological Ground State Degeneracy and Emergent Gauge Invariance"},
Phys.~Rev.~{\bf B72}, 045141 (2005).

\bibitem{tjo} T. J. Osborne, {\it ``Simulating adiabatic evolution of gapped spin systems"},
Phys.~Rev.~{\bf A75}, 032321.

\bibitem{KenTas92}
T. Kennedy and H. Tasaki,
{\it ``Hidden symmetry breaking and the Haldane phase in $S=1$ quantum spin chains"},
 Comm.~Math.~Phys.~{\bf 147},  pp.~431-484 (1992). 

\bibitem{yarot} D. A. Yarotsky, ``Ground states in relatively bounded quantum perturbations of classical lattice systems",
Commun.~Math.~Phys.~{\bf 261}, 799-819 (2006).


\bibitem{klich} I. Klich,
{\it  ``On the stability of topological phases on a lattice"},
e-print  arXiv:0912.0945 (2009).

\bibitem{KT83}
J. Kirkwood and L. Thomas,
{\it ``Expansions and Phase Transitions for the Ground State of Quantum Ising Lattice Systems"},
Commun.~Math.~Phys.~{\bf 88}, pp.~569-580 (1983).



\bibitem{Datta02}
N. Datta and T. Kennedy,  {\it ``Expansions for one quasiparticle states in spin $1/2$ systems"},
J.~Stat.~Phys.~{\bf 108}, p.~373 (2002).

\bibitem{Yarotsky04}
D. Yarotsky, {\it ``Perturbations of ground states in weakly interacting quantum spin systems"},
J.~Math.Phys.~{\bf 45}, N.~6, p.~2134 (2004).

\bibitem{Bravyi08}
S. Bravyi, D. DiVincenzo, and D. Loss,
{\it ``Polynomial-time algorithm for simulation of weakly interacting quantum spin systems"},
Commun.~Math.~Phys.~{\bf 284}, pp.~481-507 (2008).

\bibitem{solvloc} M. B. Hastings,
{\it ``Solving Gapped Hamiltonians Locally"},
Phys.~Rev.~{\bf B73}, 085115 (2006).

\bibitem{kitaev} A. Kitaev, {\it ``Anyons in an exactly solved model and beyond"},
Annals of Physics~{\bf 321}, pp.~2-111 (2006).  See proposition D.1.



\bibitem{LW} M. A. Levin and X.-G. Wen,
{\it ``String-net condensation: A physical mechanism for topological phases"},
Phys.~Rev.~{\bf B71}, 045110 (2005).

\bibitem{Freedman01}
M. H. Freedman, A. Kitaev, M. J. Larsen, and Z. Wang,
{\it ``Topological Quantum Computation"},
e-print quant-ph/0101025 (2001).

\bibitem{bhv} S. Bravyi, M. B. Hastings, and F. Verstraete,
{\it ``Lieb-Robinson Bounds and the Generation of Correlations
and Topological Quantum Order"},
Phys.~Rev.~Lett.~{\bf 97}, 050401 (2006).


\bibitem{BPT09}
S. Bravyi, D. Poulin, and B. Terhal,
{\it ``Tradeoffs for reliable quantum information storage in 2D systems"}, e-print
arXiv:0909.5200 (2009).

\bibitem{Kato}
T. Kato, {\it ``Perturbation theory for linear operators"},
Springer-Verlag New York (1966).

\bibitem{inprep} M. B. Hastings, to be published.

\bibitem{hastings-koma} M. B. Hastings and T. Koma, 
{\it ``Spectral Gap and Exponential Decay of Correlations"},
Commun. Math. Phys. {\bf 265}, 781 (2006).



\end{thebibliography}
\end{document}